\newtheorem{theorem}{Theorem}
\newtheorem{lemma}[theorem]{Lemma}
\newtheorem{definition1}{Definition}
\newcommand{\BioVSS}{\texttt{BioVSS++}\xspace}
\newcommand{\NaiveBioVSS}{\texttt{BioVSS}\xspace}
\newcommand{\BDLCF}{\texttt{BioFilter}\xspace}
\newcommand{\BioHash}{\texttt{BioHash}\xspace}
\newcommand{\LSH}{\texttt{LSH}\xspace}
\newcommand{\DESSERT}{\texttt{DESSERT}\xspace}
\newcommand{\IVFFLAT}{\texttt{IVFFLAT}\xspace}
\newcommand{\HNSW}{\texttt{HNSW}\xspace}
\newcommand{\IndexIVFPQ}{\texttt{IndexIVFPQ}\xspace}
\newcommand{\IVFScalarQuantizer}{\texttt{IVFScalarQuantizer}\xspace}
\newcommand{\WTA}{\texttt{WTA}\xspace}
\newcommand{\FlyHash}{\texttt{FlyHash}\xspace}
\newcommand{\CS}{\textsf{CS}\xspace}
\newcommand{\Medicine}{\textsf{Medicine}\xspace}
\newcommand{\Picture}{\textsf{Picture}\xspace}
\newcommand{\COO}{\texttt{COO}\xspace}
\newcommand{\CSR}{\texttt{CSR}\xspace}
\newcommand{\myparagraph}[1]{\addvspace{0.3\baselineskip}\noindent\textbf{#1.}~}
\journal{x}
\begin{document}

\begin{frontmatter}

%% Title, authors and addresses

%% use the tnoteref command within \title for footnotes;
%% use the tnotetext command for theassociated footnote;
%% use the fnref command within \author or \affiliation for footnotes;
%% use the fntext command for theassociated footnote;
%% use the corref command within \author for corresponding author footnotes;
%% use the cortext command for theassociated footnote;
%% use the ead command for the email address,
%% and the form \ead[url] for the home page:
%% \title{Title\tnoteref{label1}}
%% \tnotetext[label1]{}
%% \author{Name\corref{cor1}\fnref{label2}}
%% \ead{email address}
%% \ead[url]{home page}
%% \fntext[label2]{}
%% \cortext[cor1]{}
%% \affiliation{organization={},
%%             addressline={},
%%             city={},
%%             postcode={},
%%             state={},
%%             country={}}
%% \fntext[label3]{}

\title{Approximate Vector Set Search Inspired by Fly Olfactory Neural System}

%% use optional labels to link authors explicitly to addresses:
%% \author[label1,label2]{}
%% \affiliation[label1]{organization={},
%%             addressline={},
%%             city={},
%%             postcode={},
%%             state={},
%%             country={}}
%%
%% \affiliation[label2]{organization={},
%%             addressline={},
%%             city={},
%%             postcode={},
%%             state={},
%%             country={}}

%% 机构信息定义
\affiliation[whu-cs]{organization={School of Computer Science, Wuhan University},
                    city={Wuhan},
                    postcode={430072},
                    state={Hubei},
                    country={China}}

\affiliation[whu-bdi]{organization={Big Data Institute, Wuhan University},
                     city={Wuhan},
                     postcode={430072},
                     state={Hubei},
                     country={China}}

\affiliation[amazon]{organization={Amazon.com, Inc.},
                    city={Seattle},
                    postcode={98109},
                    state={WA},
                    country={USA}}

%% 作者信息（使用标签链接到对应机构）
\author[whu-cs]{Yiqi Li}

\author[whu-cs]{Sheng Wang} %% 通讯作者

\author[amazon]{Zhiyu Chen}

\author[whu-cs]{Shangfeng Chen}

\author[whu-cs,whu-bdi]{Zhiyong Peng} %% 通讯作者

%% Abstract
\begin{abstract}
Vector set search is an underexplored paradigm of similarity search that focuses on retrieving vector sets similar to a given query set. This paradigm captures the intrinsic structural alignment between sets and real-world entities, enabling finer-grained and more consistent relationship modeling across a wide range of applications. However, it faces significantly greater efficiency challenges than traditional single-vector search, primarily due to the combinatorial explosion of set-to-set comparisons. In this work, we address both the combinatorial complexity inherent in vector set comparisons and the curse of dimensionality carried over from single-vector search. To overcome these obstacles, we propose an efficient algorithm for vector set search, \NaiveBioVSS\footnote{This work is an extended version of our paper titled \textit{Approximate Vector Set Search: A Bio-Inspired Approach for High-Dimensional Spaces}, presented at the 41st IEEE International Conference on Data Engineering (ICDE 2025).} (\textbf{B}io-inspired \textbf{V}ector \textbf{S}et \textbf{S}earch). \NaiveBioVSS draws inspiration from the fly olfactory neural system, which employs sparse coding principles for efficient odor recognition in biological neural networks. Based on this biological mechanism, our approach quantizes vectors into sparse binary codes and employs a Bloom filter-based index leveraging set membership properties. Experimental results show that \NaiveBioVSS achieves over 50× speedup compared to brute-force linear scanning on million-scale datasets, while maintaining a high recall of up to 98.9\%, demonstrating its effectiveness for large-scale vector set search.
\end{abstract}

\begin{keyword}
Similarity Search \sep Vector Set Search \sep Bloom Filter
\end{keyword}

\end{frontmatter}

%% Add \usepackage{lineno} before \begin{document} and uncomment 
%% following line to enable line numbers
%% \linenumbers

%% main text
%%

\section{Introduction}
Vector search constitutes a core computational task in numerous fields, including information retrieval \cite{huang2013learning}, recommender systems \cite{nigam2019semantic}, and computer vision \cite{johnson2019billion}. Most existing methods focus on single-vector queries and aim to address challenges such as the curse of dimensionality \cite{indyk1998approximate} and the sparsity of high-dimensional vector spaces \cite{zhao2023towards}. In this work, we investigate a novel and challenging problem termed \textit{Vector Set Search}. Given a query vector set $\mathbf{Q}$, the goal is to retrieve the top-$k$ most relevant vector sets from a database $\mathbf{D} = \{\mathbf{V}_1, \mathbf{V}_2, \dots, \mathbf{V}_n\}$. This problem extends beyond conventional single-vector search \cite{ICDE_cluster_ann} by involving intra-set vector matching and aggregation, leading to increased computational complexity. At present, there are no efficient solutions available to handle this complexity effectively.

\noindent \textbf{Applications.} Vector set search supports a wide range of applications due to its inherent structural alignment with real-world entities, including:

\begin{itemize}[noitemsep, leftmargin=*]
\item \textit{Academic Entity Discovery.} Academic entity discovery supports efficient navigation of the rapidly expanding scholarly landscape. By modeling academic entities as vector sets—such as representing a paper’s profile \cite{yin2021mrt} through vectors corresponding to its citations, or characterizing an author’s profile \cite{topic_trajectory} as a vector set derived from their publications—vector set search facilitates the identification of related research and the analysis of academic trends.

\item \textit{Multi-Modal Query Answering.} Multi-modal query answering systems benefit from vector set search to achieve more precise retrieval. As illustrated in \cite{wang2024interactive}, multi-modal inputs (e.g., text, images, and audio) are encoded into vector sets using a unified multi-modal encoder \cite{liang2022mind}. Vector set search then enables cross-modal retrieval by aligning semantic representations across modalities, thereby enhancing the accuracy of the query results.

\item \textit{Multi-Vector Ranking for Recommendation.} In recommendation systems, multi-vector ranking leverages vector set representations to model users. As shown in \cite{Profiles_user}, users are represented as vector sets that encode different facets of user behavior—such as browsing activities, search histories, and click interactions. This vector set-based ranking framework enables flexible matching across diverse user preferences.
\end{itemize}

\begin{figure}[t]
    \centering
    \captionsetup{aboveskip=2pt}
    \includegraphics[width=0.8\textwidth]{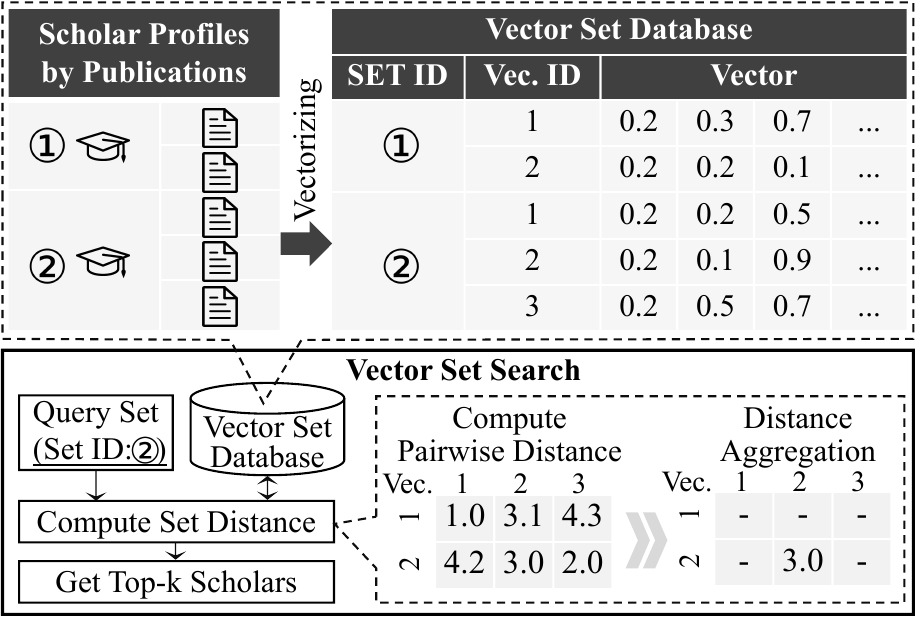}
    \caption{An Exemplar Data Flow of Vector Set Search}
    \label{fig: example}
    % \addvspace{-15.5pt}
\end{figure}

\vskip -0.2em % 通过设置负值来减小空白
Let us delve into scholar search, a characteristic example of vector set search that we examine as our main experimental focus.

% 细化下应用；
% top$-k$需要斜体

%\vskip -0.2em % 通过设置负值来减小空白
\noindent \textit{\underline{Example 1:}} \textit{Figure~\ref{fig: example} presents a representative data flow of vector set search in the context of scholar search \cite{topic_trajectory}, following the construction of a vector set database. The database preparation stage involves encoding scholar profiles as vector sets based on their publication records \cite{bert}, forming a structured database of scholars. The vector set search process consists of two primary phases: distance computation and neighbor retrieval. In the distance computation phase, pairwise distances (e.g., Euclidean distance) between the query vector set and each candidate set in the database are computed and subsequently aggregated (e.g., via $max$ and $min$ operations) into a set-level distance. The neighbor retrieval phase then identifies the top-$k$ most similar scholars according to these aggregated distances. This procedure facilitates the discovery of relevant scholars to support downstream research tasks.}

\noindent \textbf{Motivation.} The preceding example highlights an important observation: the structure of vector sets exhibits strong correspondence with real-world entities, enabling practical and effective applications. With the recent progress in embedding models \cite{bert} capable of generating high-dimensional, semantically rich representations, the range of applicable scenarios has significantly broadened. Nonetheless, the computational complexity arising from high-dimensional vectors and the combinatorial growth of pairwise comparisons \cite{aumuller2023solving} presents notable challenges for building efficient search solutions.

Recent works such as \DESSERT \cite{DESSERT} have adopted hash table-based techniques to accelerate certain types of non-metric distance calculations. However, accurately assessing vector set similarity requires more than speed—it demands reliable discrimination and consistency across comparisons. These fundamental properties hinge on the choice of distance metrics (analyzed in Section~\ref{sec: motivation_hasusdorff}). The Hausdorff distance \cite{hausdorff1914grundzuge, henrikson1999completeness}, by design, satisfies these criteria, offering a mathematically sound means for evaluating set-to-set similarity while preserving essential symmetric properties critical for dependable comparisons.

Driven by these foundational needs in set-level similarity measurement, we emphasize the importance of using metric distances for maintaining evaluation consistency. The Hausdorff distance, rooted in classical set theory \cite{hausdorff1914grundzuge, henrikson1999completeness}, inherently supports set-based distance computation. However, its reliance on exhaustive pairwise distance evaluation renders it computationally expensive, reinforcing the necessity of more efficient approaches for vector set search.

\noindent \textbf{Challenges.} Enabling efficient vector set search in high-dimensional settings under Hausdorff distance introduces several core challenges. The first is the well-known \textit{curse of dimensionality} \cite{indyk1998approximate}. While some recent approaches attempt to mitigate this issue through hash table construction \cite{DESSERT} or by reducing vector sets to single-vector representations \cite{set_to_single}, these techniques are inherently limited to specific non-metric distances, thereby restricting their suitability for Hausdorff-based computations. The second challenge lies in the \textit{aggregation complexity}. Existing algorithms for Hausdorff distance search \cite{wang2021survey, 2021_POINT_SETS, nutanong2011incremental} typically exploit geometric properties to reduce the cost of aggregation. However, such geometric heuristics become ineffective as dimensionality increases, limiting their practical value in high-dimensional scenarios.

\noindent \textbf{Contributions.} To overcome the aforementioned challenges, we propose an approximate top-$k$ vector set search algorithm. Our method is inspired by locality-sensitive hashing techniques \cite{FlyLSH_dasgupta2017neural, FlyLSH_ryali2020bio}, which are in turn motivated by the biological olfactory circuit \cite{luo2010generating}. By emulating the mechanisms of the olfactory circuit, we quantize input vectors into sparse binary codes, thereby mitigating the curse of dimensionality. On top of this, we design an indexing structure based on Bloom filters, leveraging their set membership properties to narrow down the search space and reduce the cost of aggregation. Our main contributions are outlined below:

\begin{itemize}[leftmargin=*, topsep=0pt]
\item We formulate the first approximate vector set search problem in high-dimensional spaces using Hausdorff distance, a natural metric for measuring set similarity (see Section~\ref{sec_problem}).
\item We introduce \NaiveBioVSS, an algorithm that exploits locality-sensitive properties to accelerate vector set search, and provide formal theoretical analysis with proofs to establish its correctness (see Section~\ref{Sec_NaiveBioVSS}).
\item We further develop \BioVSS, an improved version of \NaiveBioVSS that incorporates a dual-layer cascaded filtering mechanism combining inverted index and vector set sketches to minimize unnecessary candidate scans (see Section~\ref{Sec_BioVSS}).
\item We perform comprehensive experiments demonstrating that our method achieves more than 50× speedup over linear scan baselines on million-scale datasets, while maintaining a recall rate of up to 98.9\%, confirming its practical efficiency (see Section~\ref{sec:Experiments}).
\end{itemize}

\section{RELATED WORK}
\label{sec: relatedWork}

\myparagraph{Single-Vector Search}
With the advancement of embedding models, single-vector search has emerged as a widely adopted unified paradigm \cite{zhang2023vbase}. Nonetheless, achieving high efficiency remains a significant bottleneck in real-world applications. To this end, numerous approximate search strategies \cite{ANNsurvey_2020} have been introduced to accelerate single-vector retrieval. These strategies are broadly classified into three categories: locality-sensitive hashing \cite{FlyLSH_ryali2020bio, DevFly, PM-LSH}, graph-based techniques \cite{NavigableSmallWorldGraphs}, and space partitioning schemes \cite{Tree_Indexes, partition_BLISS}. \LSH approaches like \FlyHash \cite{FlyLSH_ryali2020bio} aim to assign similar vectors into identical hash buckets. Graph-based solutions, such as \HNSW \cite{NavigableSmallWorldGraphs}, create navigable proximity graphs to support rapid query traversal. Partitioning-based methods, including \IVFFLAT \cite{zhang2023vbase, douze2024faiss}, \IndexIVFPQ \cite{ge2013optimized, jegou2010product, douze2024faiss}, and \IVFScalarQuantizer \cite{douze2024faiss}, utilize \texttt{k-means} clustering \cite{DBLP_WangSB20} to divide the space and construct inverted indices for faster search. However, these techniques are tailored specifically for single-vector queries, and are not designed to accommodate set-based scenarios.

\myparagraph{Vector Set Search}
Vector set search has garnered relatively limited attention and remains underexplored. Earlier studies primarily focused on low-dimensional vector spaces, approaching vector set search as a set matching problem solvable via algorithms such as the Kuhn-Munkres method \cite{kuhn1955hungarian}. In geospatial research, vector sets representing trajectories were treated as point collections and measured using Hausdorff distance \cite{Polygons_hausdorff, wang2021survey, DBLP_shengBCXLQ18}. Techniques were proposed to improve efficiency, such as integrating road network structures \cite{Pattern_Networks}, applying branch-and-bound algorithms \cite{nutanong2011incremental}, and estimating Hausdorff distances for speedup \cite{2021_POINT_SETS}. Nevertheless, these solutions were designed for low-dimensional contexts and do not translate effectively to high-dimensional domains. The rise of embedding models allows vector sets in high-dimensional spaces to express rich semantic information. Some approaches convert a set of vectors into a single vector \cite{set_to_single}, enabling the application of existing approximate search methods developed for single-vector queries. However, such conversion depends on specific distance functions defined by the corresponding methods. The approach in \cite{DESSERT}, for instance, constructs a hash table to accelerate vector set search, but has limitations when used with certain metrics. Particularly, it lacks theoretical guarantees for Hausdorff distance when the outer aggregation involves the $max$ function. Moreover, current methods provide insufficient support for Hausdorff distance. Our work addresses this gap by focusing on vector set search in high-dimensional space under the Hausdorff distance metric.

\myparagraph{Bio-Inspired Hashing \& Bloom Filter}
Biological olfactory systems across species demonstrate exceptional accuracy in odor recognition \cite{buck1991novel}. For example, mice (rodents) differentiate odors by interpreting neural signals in the olfactory bulb \cite{rhein1980biochemical}, while zebrafish (teleosts) detect amino acids in water through olfactory signaling \cite{pace1985odorant}. Among them, research on flies’ olfactory mechanisms is the most thorough, inspiring the design of sparse locality-sensitive hashing (LSH) functions \cite{FlyLSH_dasgupta2017neural, FlyLSH_ryali2020bio} via computational modeling. Fly-inspired hashing \cite{FlyLSH_dasgupta2017neural}, grounded in the fly olfactory system, offers an LSH technique for single-vector search. In this model, specific odor molecules activate only a small subset of neurons, resulting in sparse activations that facilitate robust odor identification. \BioHash \cite{FlyLSH_ryali2020bio} enhances performance by learning internal data structures to assign connection weights among projection neurons, emulating the biological connectivity of the fly’s olfactory circuit \cite{vosshall2000olfactory}. On a similar note, Bloom filters support efficient set representation by mapping elements to sparse arrays via hash functions. The binary Bloom filter \cite{vosshall2000olfactory} employs a bit array with binary states, while the counting Bloom filter \cite{bonomi2006improved} refines this with count-based granularity. This work integrates the fly olfactory circuit’s locality-sensitive behavior with the structural advantages of Bloom filters. We utilize these parallels to build an efficient indexing structure.

% \begin{table}[!tbp]
%     \centering
%     \caption{The Method of Similarity Search with Vector}
%     \vspace{-1em} 
%     \resizebox{\columnwidth}{!}{%
%     \begin{tabular}{l>{\centering\arraybackslash}p{2cm}>{\centering\arraybackslash}p{2cm}>{\centering\arraybackslash}p{2cm}}
%         \toprule
%         \textbf{Method} & \textbf{Native Search Type} & \textbf{Theory Maturity} & \textbf{Accel. Techniques} \\
%         \midrule
%         NSW \cite{NavigableSmallWorldGraphs} & Single-vector & Developing & Index \\
%         LSH \cite{indyk1998approximate} & Single-vector & Well-developed & Index \\
%         FlyHash \cite{FlyLSH_dasgupta2017neural, FlyLSH_ryali2020bio} & Single-vector & Well-developed & Quant \\
%         IVF \cite{ge2013optimized, douze2024faiss} & Single-vector & Developing  & Index \\
%         PQ \cite{jegou2010product} & Single-vector & Developing & Quant \\
%         DESSERT \cite{DESSERT} & Vector set & Well-developed & Quant \& Index \\
%         \BioVSS (Proposed) & Vector set & Well-developed & Quant \& Index \\
%         \bottomrule
%     \end{tabular}
%     }
%     \label{tab: literature}
% \end{table}
% \setlength{\textfloatsep}{0cm}

\section{{DEFINITIONS \& PRELIMINARIES}}
This section presents the problem formulation and analyzes why the Hausdorff distance is suitable for vector set comparison.

\label{sec_problem}
\subsection{Problem Definition}
\begin{definition1}[\textbf{Vector}]
A \textbf{vector} $\mathbf{v} = (v_1, v_2, ..., v_d)$ is a tuple of real numbers,  in which $v_i \in \mathbb{R}$ and $d \in \mathbb{N}^+$ represents the dimensionality of the vector.
\end{definition1}

\begin{definition1}[\textbf{Vector Set}]
A \textbf{vector set} $\mathbf{V} = \{\mathbf{v}_1, \mathbf{v}_2, ..., \mathbf{v}_m\}$ contains a set of vectors, where $m \in \mathbb{N}^+$ is number of vectors.
\end{definition1}

\begin{definition1}[\textbf{Vector Set Database}]
A \textbf{vector set database} $\mathbf{D} = \{\mathbf{V}_1, \mathbf{V}_2, ..., \mathbf{V}_n\}$ is a collection of vector sets, where each $\mathbf{V}_i$ is a vector set and $n \in \mathbb{N}^+$ is the number of vector sets.
\end{definition1}

\begin{definition1}[\textbf{Hausdorff Distance}] Given two vector sets $\mathbf{Q}$ and $\mathbf{V}$, the \textbf{Hausdorff distance} from $\mathbf{Q}$ to $\mathbf{V}$ is defined as:
\begin{equation*}
Haus(\mathbf{Q}, \mathbf{V})=\max \left(\max _{\mathbf{q} \in \mathbf{Q}} \min _{\mathbf{v} \in \mathbf{V}}dist(\mathbf{q}, \mathbf{v}), \max _{\mathbf{v} \in \mathbf{V}} \min _{\mathbf{q} \in \mathbf{Q}}dist(\mathbf{v}, \mathbf{q})\right),
\end{equation*}
where $dist(\mathbf{q}, \mathbf{v})=  \|\mathbf{q} - \mathbf{v}\|_2 $ is the Euclidean distance between vectors $\mathbf{q} \in \mathbb{R}^d$ and $\mathbf{v} \in \mathbb{R}^d$.
\label{def:Hausdorff}
\end{definition1}

To clarify the computational steps involved in calculating the Hausdorff distance, we present a concrete example below.

\noindent \textit{\underline{Example 2:}} \textit{Figure \ref{fig: hausdorff} demonstrates the procedure for computing the Hausdorff distance. Consider two finite vector sets $\mathbf{Q}=\{\mathbf{q}_1,\mathbf{q}_2,\mathbf{q}_3\}$ and $\mathbf{V}=\{\mathbf{v}_1,\mathbf{v}_2\}$ situated in a high-dimensional space. The Hausdorff distance $Haus(\mathbf{Q}, \mathbf{V})$ is obtained via the following operations: (1) compute the maximum over the minimal distances from each vector in $\mathbf{V}$ to all vectors in $\mathbf{Q}$, which yields $\max_{\mathbf{v}\in\mathbf{V}} \min_{\mathbf{q}\in\mathbf{Q}} d(\mathbf{v},\mathbf{q})=2$; (2) compute the maximum over the minimal distances from each vector in $\mathbf{Q}$ to all vectors in $\mathbf{V}$: $\max_{\mathbf{q}\in\mathbf{Q}} \min_{\mathbf{v}\in\mathbf{V}} d(\mathbf{q},\mathbf{v})=3$; (3) aggregate the two values using the $max$ operator to obtain the final Hausdorff distance: $Haus(\mathbf{Q}, \mathbf{V}) = 3$.}

The computation of the Hausdorff distance entails evaluating all pairwise distances between vectors in the two sets, followed by aggregation as defined in Definition \ref{def:Hausdorff}. This leads to a time complexity of $O(m^2 \cdot d)$, where $m$ denotes the number of vectors in each set and $d$ indicates the dimensionality. The quadratic growth with respect to $m$ renders exact computation impractical for large datasets, particularly when both the set size and dimensionality $d$ are high.

\begin{figure}[t]
    \centering
    \captionsetup{aboveskip=2pt}
    \includegraphics[width=0.8\textwidth]{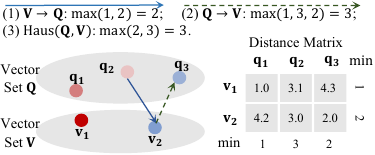}
    \caption{The Calculation Process of Hausdorff Distance}
    \label{fig: hausdorff}
    % \addvspace{-15.5pt}
\end{figure}

To mitigate this computational burden, prior work in single-vector search has demonstrated the effectiveness of approximation strategies, including quantization \cite{FlyLSH_dasgupta2017neural, FlyLSH_ryali2020bio, DevFly} and index pruning \cite{Tree_Indexes, partition_BLISS, NavigableSmallWorldGraphs, ge2013optimized, jegou2010product, douze2024faiss}. Motivated by these insights, we formulate the approximate top-$k$ vector set search problem.

\begin{definition1}[\textbf{Approximate Top-$k$ Vector Set Search}]
\label{ApproximateSearch}
Given a vector set database $\mathbf{D} = \{\mathbf{V}_1, \mathbf{V}_2, ..., \mathbf{V}_n\}$, a query vector set $\mathbf{Q}$ and a Hausdorff distance function $Haus(\mathbf{Q}, \mathbf{V})$, the \textbf{approximate top-k vector set search} is the task of returning $\mathbf{R}$ with probability at least $1-\delta$:
$$
\mathbf{R} = \{ \mathbf{V}^{\star}_1, \mathbf{V}^{\star}_2, ..., \mathbf{V}^{\star}_k \} = \operatorname{argmin}^k_{\mathbf{V}^{\star}_i \in \mathbf{D}} Haus(\mathbf{Q}, \mathbf{V}^{\star}_i),
$$
where $\operatorname{argmin}^k_{\mathbf{V}_i \in \mathbf{D}}$ selects $k$ sets $\mathbf{V}^{\star}_i$ minimizing $Haus(\mathbf{Q}, \mathbf{V}^{\star}_i)$ and $\mathbf{V}^{\star}_k$ has the $k$-th smallest distance from $\mathbf{Q}$. The failure probability is denoted by $\delta \in [0, 1]$.
\end{definition1}

Approximate top-$k$ vector set search achieves higher efficiency by allowing a minor loss in accuracy, effectively balancing computational speed and result precision.

\subsection{Suitability Analysis of Hausdorff Distance}
\label{sec: motivation_hasusdorff}
\begin{figure}[h]
    \centering
    \captionsetup{aboveskip=2pt}
    \includegraphics[width=0.85\textwidth]{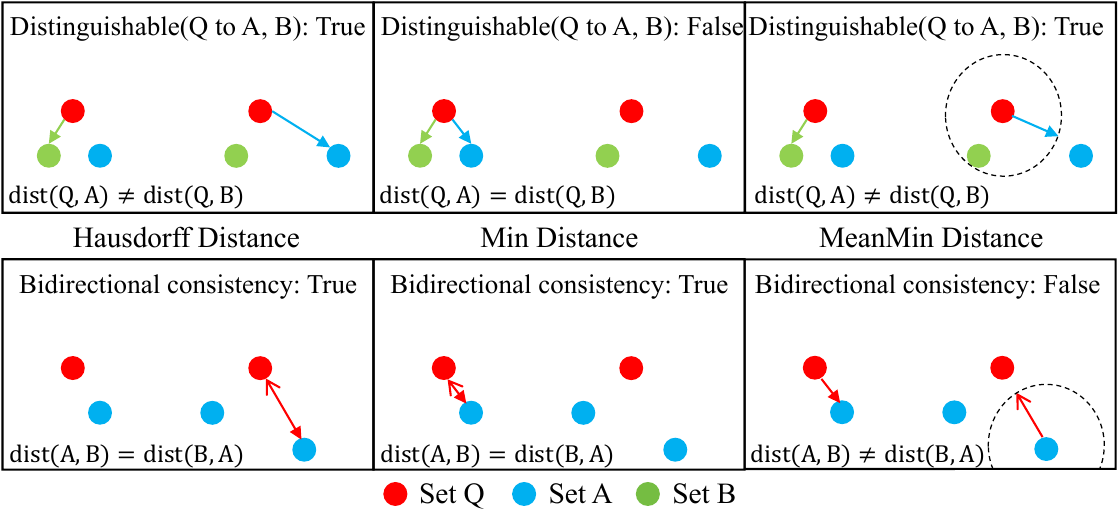}
    \caption{Comparative Analysis of Different Distance Measures}
    \label{fig: meticMotivation}
\end{figure}
To demonstrate the advantages of the Hausdorff distance in vector set comparison, a comparative analysis of three distance measures is conducted: minimum distance, mean minimum distance, and Hausdorff distance. The focus is on illustrating the superior precision and symmetry of the Hausdorff distance.

Three vector sets $\mathbf{Q}$, $\mathbf{A}$, and $\mathbf{B}$ are used for the precision analysis, each containing two vectors. For the symmetry analysis, two vector sets $\mathbf{Q}$ and $\mathbf{A}$ are employed, where $\mathbf{Q}$ contains two vectors and $\mathbf{A}$ contains three vectors. For visualization purposes, all vectors are represented as points in a two-dimensional space. This representation allows for a clear illustration of the distance measures' properties.

The distance measures are defined as follows:
\begin{enumerate}
    \item Minimum (Min) distance: 
          $$d_{min}(\mathbf{A},\mathbf{B}) = \min_{\mathbf{a}\in \mathbf{A}, \mathbf{b}\in \mathbf{B}} d(\mathbf{a},\mathbf{b}),$$
    
    \item Mean Minimum (MeanMin) distance: 
          $$d_{mean-min}(\mathbf{A},\mathbf{B}) = \frac{1}{|\mathbf{A}|}\sum_{\mathbf{a}\in \mathbf{A}} \min_{\mathbf{b}\in \mathbf{B}} d(\mathbf{a},\mathbf{b}),$$
    
    \item Hausdorff distance: 
          As defined in Definition \ref{def:Hausdorff}.
\end{enumerate}

Here, $d(\cdot, \cdot)$ denotes the Euclidean distance, and $|\mathbf{A}|$ denotes the cardinality of vector set $\mathbf{A}$.

To evaluate precision, the distances between \( \mathbf{Q} \) and vector sets \( \mathbf{A} \) and \( \mathbf{B} \) are analyzed, as shown in the first row of Figure \ref{fig: meticMotivation}. The specific distance matrices are presented below:

\begin{equation*}
\begin{array}{c|cc}
\mathbf{Q} \text{ to } \mathbf{A} & \mathbf{Q}_1 & \mathbf{Q}_2 \\
\hline
\mathbf{A}_1 & 1 & 5 \\
\mathbf{A}_2 & 6 & 3
\end{array}
\qquad
\begin{array}{c|cc}
\mathbf{Q} \text{ to } \mathbf{B} & \mathbf{Q}_1 & \mathbf{Q}_2 \\
\hline
\mathbf{B}_1 & 1 & 5 \\
\mathbf{B}_2 & 4 & 1
\end{array}
\end{equation*}

The analysis yields the following results:
\begin{enumerate}
    \item $d_{min}(\mathbf{Q},\mathbf{A}) = d_{min}(\mathbf{Q},\mathbf{B}) = 1$
    \item $d_{mean-min}(\mathbf{Q},\mathbf{A}) = 2$, $d_{mean-min}(\mathbf{Q},\mathbf{B}) = 1$
    \item $d_H(\mathbf{Q},\mathbf{A}) = 3$, $d_H(\mathbf{Q},\mathbf{B}) = 2$
\end{enumerate}

These results clearly demonstrate the superior precision of the Hausdorff distance. The minimum distance fails to differentiate between $\mathbf{Q}$'s relationships with $\mathbf{A}$ and $\mathbf{B}$. The mean minimum distance and Hausdorff distance clearly distinguish both the similarities and differences between the vector sets. While both $d_{mean-min}$ and $d_H$ show discriminative power, we further examine their symmetry properties.

To examine symmetry, a case with two vector sets, $\mathbf{Q}$ and $\mathbf{A}$, where $\mathbf{Q}$ contains two vectors and $\mathbf{A}$ contains three vectors, is analyzed. The distance matrix is:

\begin{equation*}
\begin{array}{c|cc}
\mathbf{Q} \text{ to } \mathbf{A} & \mathbf{Q}_1 & \mathbf{Q}_2 \\
\hline
\mathbf{A}_1 & 1 & 4 \\
\mathbf{A}_2 & 4 & 1 \\
\mathbf{A}_3 & 7 & 3
\end{array}
\end{equation*}

The analysis yields the following results:
\begin{enumerate}
    \item $d_{min}(\mathbf{Q},\mathbf{A}) = d_{min}(\mathbf{A},\mathbf{Q}) = 1$
    \item $d_{mean-min}(\mathbf{Q},\mathbf{A}) = 1$, $d_{mean-min}(\mathbf{A},\mathbf{Q}) = 1.67$
    \item $d_H(\mathbf{Q},\mathbf{A}) = d_H(\mathbf{A},\mathbf{Q}) = 3$
\end{enumerate}

These results highlight the perfect symmetry of the Hausdorff distance. The minimum distance also exhibits symmetry. However, the mean minimum distance produces different results depending on the direction of comparison. The Hausdorff distance maintains consistency regardless of the order of vector set comparison, ensuring reliable and consistent similarity assessments.

The examples demonstrate the superiority of the Hausdorff distance in vector set comparisons. Its advantages are twofold. First, it provides enhanced precision in set differentiation. Second, it maintains consistent symmetry regardless of comparison direction. These properties are not present in other common measures. The Hausdorff distance effectively captures fine-grained differences while ensuring mathematical consistency. This balance makes it particularly suitable for complex vector set comparisons. As a result, it serves as a versatile and reliable metric across various applications.

\section{Proposed BioVSS}
\label{Sec_NaiveBioVSS}
We propose \NaiveBioVSS, a search algorithm optimized for the approximate top-$k$ vector set search task. Inspired by the fly's olfactory circuitry, \NaiveBioVSS exploits its inherent locality-sensitive characteristics to improve search performance. The theoretical basis for the locality-sensitive Hausdorff distance is laid out in Section \ref{sec: theory}.

\subsection{Overview Algorithm}
\label{sec: naiveBioVSS}
\NaiveBioVSS consists of two main components: 1) the binary hash encoding for vector sets, and 2) the search execution in \NaiveBioVSS.

\begin{table}[t]
\centering
\footnotesize 
\caption{Summary of Major Notations}
% \vspace{-1em} % 调整表格与文字之间的距离
\begin{tabular}{ll}
\hline
\textbf{Notation} & \textbf{Description} \\
\hline
$\mathbf{D}$ & Vector Set Database \\
$\mathbf{D}^\mathbf{H}$ & Sparse Binary Codes of $\mathbf{D}$ \\
$\mathbf{T}, \mathbf{Q}$ & Vector Set of Target and Query\\
$n, m , m_q $ & Cardinality of $\mathbf{D}$, $\mathbf{T}$,  and $\mathbf{Q}$\\
$L_{wta}=L$ & Number of Hash Functions / Winner-Takes-All \\
$\mathcal{H}$ & Hash Function (see Definition \ref{def: FlyHash})\\
$k$ &  The Number of Results Returned \\
$\sigma(\cdot)$ & Min-Max Similarity Function (see Lemma \ref{lemma:sigma_bounds}) \\
$S_{ij}^{\alpha}, S_{ij}^{\beta}, \mathbf{S}$ & (i,j)-th Entry of the Similarity Matrix $\mathbf{S}$ \\
$s_{\max}, s_{\min}$ & Max. and Min. Real Similarities btw. Vector Sets \\
$\hat{\mathbf{S}}, \hat{s}_{\max}, \hat{s}_{\min}$ & Estimated Values of $\mathbf{S}, s_{\max}$, and $s_{\min}$ \\
$\delta$ & Vector Set Search Failure Probability \\
$B_{\alpha}^{\star}, B_{\beta}, B^{\prime}, B^{\star}$ & Similarity Bounds of Vector Sets (see Theorem \ref{theorem: ultimate}) \\
\hline
\end{tabular}
\label{tab:notations}
\end{table}

\subsubsection{Binary Hash Encoding for Vector Set}
The rationale for employing binary hash encoding lies in harnessing the similarity-preserving nature of hashing to reduce computational overhead and improve search efficiency.

We begin by formalizing the unified framework of \LSH, which underpins a range of \LSH methods \cite{FlyLSH_dasgupta2017neural, FlyLSH_ryali2020bio, DevFly, PM-LSH}, originally introduced in \cite{indyk1998approximate}.

\begin{definition1}[\textbf{Locality-Sensitive Hashing Function} \cite{indyk1998approximate}]
An \LSH hash function $h: \mathbb{R}^d \rightarrow \mathbb{R}$ is called a similarity-preserving hash for vectors $\mathbf{a}, \mathbf{b} \in \mathbb{R}^d$ if
\[
\mathbb{P}(h(\mathbf{a})=h(\mathbf{b})) = \operatorname{sim}(\mathbf{a}, \mathbf{b}), \forall \mathbf{a}, \mathbf{b} \in \mathbb{R}^d,
\]
where $\operatorname{sim}(\mathbf{a}, \mathbf{b}) \in [0,1]$ is the similarity between vectors $\mathbf{q} \in \mathbb{R}^d$ and $\mathbf{v} \in \mathbb{R}^d$.
\label{def_LSH}
\end{definition1}

\begin{figure}[h]
    \addvspace{-11pt}
    \centering
    \captionsetup{aboveskip=2pt}
    \includegraphics[width=0.85\textwidth]{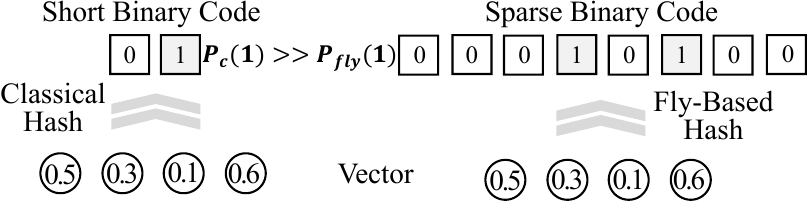}
    \caption{Classical Hashing vs. Fly-Based Hashing}
    \label{fig: tradition}
    % \addvspace{-11pt}
\end{figure}

Building upon this unified paradigm, our framework incorporates a fly-inspired hashing mechanism. As depicted in Figure \ref{fig: tradition}, this method fundamentally contrasts with conventional hashing schemes in its approach to dimensional transformation. While classical hashing methods commonly apply dimensionality reduction to produce dense hash codes, the fly-based strategy draws from the neural projection mechanisms observed in the Drosophila olfactory system \cite{FlyLSH_dasgupta2017neural}, performing expansion via projections and yielding sparse binary representations.

The adoption of fly-based hashing is driven by two main considerations. First, its sparse binary encoding naturally aligns with the set-oriented structure of Bloom filters, which is critical for the functionality of our framework (refer to Section \ref{Sec_BioVSS}). This structural alignment renders it more appropriate than traditional distance-preserving hashing functions. Second, within fly-inspired methods, we employ \BioHash \cite{FlyLSH_ryali2020bio} in preference to the original \FlyHash \cite{FlyLSH_dasgupta2017neural}, as empirical evaluations indicate that \BioHash offers roughly double the similarity preservation performance compared to \FlyHash \cite{FlyLSH_ryali2020bio}.

\begin{definition1}[\textbf{FlyHash} \cite{FlyLSH_dasgupta2017neural} \& \textbf{BioHash} \cite{FlyLSH_ryali2020bio}]
A \texttt{FlyHash} or \texttt{BioHash} is a function $\mathcal{H}: \mathbb{R}^d \rightarrow \{0, 1\}^b$ that maps a vector $\mathbf{v} \in \mathbb{R}^d$ to a sparse binary vector $\mathbf{h} = \mathcal{H}(\mathbf{v}) \in \{0, 1\}^b$. The hash code is generated as $\mathbf{h} = WTA(W\mathbf{v})$, where $W \in \mathbb{R}^{b \times d}$ is a random projection matrix, and $WTA$ is the Winner-Take-All (\WTA) operation that sets the $L_{wta}$ largest elements to 1 and the rest to 0. The resulting binary vector $\mathbf{h}$ has $L_{wta}$ non-zero elements ($\|\mathbf{h}\|_1 = L_{wta} \ll b$), where $L_{wta}$ also represents the number of hash functions composing $\mathcal{H}$ defined in Lemma \ref{def_LSH}.
\label{def: FlyHash}
\end{definition1}

\setlength{\textfloatsep}{0cm}
\begin{algorithm}[!hb]
\small
\caption{\texttt{Gen\_Binary\_Codes}($\mathbf{D}, L_{wta}$)}
\label{alg:hashing}
\KwIn{$\mathbf{D} = \{\mathbf{V}_i\}_{i=1}^n$: vector set database, $L_{wta}$: \# of \WTA.}
\KwOut{$\mathbf{D}^\mathbf{H}$: sparse binary codes.}

\textbf{Initialization:} $\mathbf{D}^\mathbf{H} = \emptyset$

\For{$j = 1$ \KwTo $n$}{
    $\mathbf{H}_j = \emptyset$\;
    \For{each $\mathbf{v} \in \mathbf{V}_j$}{
        $\mathbf{h}_p = W\mathbf{v}$\tcp*{Matrix projection}
        $\mathbf{h} = WTA(\mathbf{h}_p,L_{wta})$\tcp*{\hspace{-2.55mm} Winner-takes-all}
        $\mathbf{H}_j = \mathbf{H}_j \cup \{\mathbf{h}\}$\;
    }
    $\mathbf{D}^\mathbf{H} = \mathbf{D}^\mathbf{H} \cup \{\mathbf{H}_j\}$\;
}

\Return $\mathbf{D}^\mathbf{H}$\;
\end{algorithm}

As shown in Figure \ref{fig: hashing}(a), \FlyHash \cite{FlyLSH_dasgupta2017neural} constructs \LSH codes by simulating the fly’s olfactory neural circuit. It projects input vectors into a higher-dimensional space using a projection matrix that models the behavior of projection neurons, thereby increasing representational richness. Winner-take-all mechanism is then employed to produce sparse binary codes by retaining only the most active neurons. This approach has been proven to satisfy the locality-sensitive property \cite{FlyLSH_dasgupta2017neural} (see Definition \ref{def_LSH}).

\begin{figure}[t]
    \addvspace{-11pt}
    \centering
    \captionsetup{aboveskip=2pt}
    \includegraphics[width=0.8\textwidth]{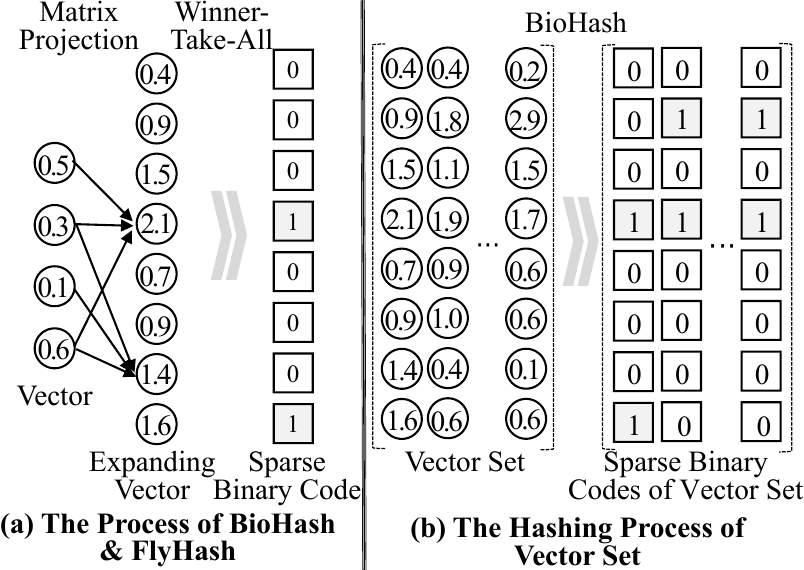}
    \caption{The Hashing Process of \NaiveBioVSS}
    \label{fig: hashing}
    % \addvspace{-11pt}
\end{figure}

Theoretical justifications for this method are provided in Section \ref{sec: theory}, where we demonstrate that the generated hash codes are well-suited to support approximate top-$k$ vector set search. As depicted in Figure \ref{fig: hashing}(b), the vector sets are first quantized into binary hash codes. We now proceed to describe the hashing procedure in detail.

Algorithm \ref{alg:hashing} presents the procedure for computing sparse binary representations for a database of vector sets. The input includes a database $\mathbf{D}$ containing $n$ vector sets $\{\mathbf{V}_i\}_{i=1}^n$ and a parameter $L_{wta}$, which governs the WTA operation. The output $\mathbf{D}^\mathbf{H}$ consists of the corresponding sparse binary codes. The algorithm processes each vector set $\mathbf{V}_j$ in sequence (line 2), applying the \BioHash transformation (refer to Definition \ref{def: FlyHash}) to every vector $\mathbf{v}$ it contains (lines 5–6). This transformation involves computing $W\mathbf{v}$ through a linear projection, followed \WTA operation that retains the top $L_{wta}$ entries as 1 while setting the rest to 0, yielding a binary vector $\mathbf{h}$ with exactly $L_{wta}$ active bits. These codes are aggregated on a per-vector-set basis (line 7) and stored in $\mathbf{D}^\mathbf{H}$ (line 8).

\begin{algorithm}[t]
\small
\caption{\texttt{BioVSS\_Topk\_Search}($\mathbf{Q}, k, \mathbf{D}, \mathbf{D}^\mathbf{H}, L_{wta}$,$c$)}
\label{alg:query}
\KwIn{$\mathbf{Q}$: query vector set, $k$: \# of top sets, $\mathbf{D} = \{\mathbf{V}_i\}_{i=1}^n$: vector set database, $\mathbf{D}^\mathbf{H}$: sparse binary codes, $L_{wta}$: \# of \WTA, $c$: the size of candidate set.} 
\KwOut{$\mathbf{R}$: top-$k$ vector sets.}

\textbf{Initialization:} $\mathbf{Q}^\mathbf{H} = \emptyset$, $\mathcal{C} = \emptyset$\;
\For{each $\mathbf{q} \in \mathbf{Q}$}{
    $\mathbf{h}_p = W\mathbf{q}$\tcp*{Matrix projection}
    $\mathbf{h}_q = WTA(\mathbf{h}_p, L_{wta})$\tcp*{Winner-takes-all}
    $\mathbf{Q}^\mathbf{H} = \mathbf{Q}^\mathbf{H} \cup \{\mathbf{h}_q\}$\;
}
\tcp{Select candidates}
\For{$j = 1$ \KwTo $n$}{
    $d_H = Haus^H(\mathbf{Q}^\mathbf{H},\mathbf{H}_j)$\;
    $\mathcal{C} = \mathcal{C} \cup \{(\mathbf{V}_j, d_H)\}$\;
}
$\mathcal{F} = \{(\mathbf{V}_i, d_H) \in \mathcal{C} \mid d_H \leq d_H^{(c)}\}$, where $d_H^{(c)}$ is the $c$-th smallest $d_H$ in $\mathcal{C}$\;
\tcp{Select top-$k$ results}
$\mathcal{D} = \emptyset$\;
\For{each $(\mathbf{V}_i, d_H) \in \mathcal{F}$}{
    $d_i = Haus(\mathbf{Q}, \mathbf{V}_i)$\;
    $\mathcal{D} = \mathcal{D} \cup \{(\mathbf{V}_i, d_i)\}$\;
}
$\mathbf{R} = \{(\mathbf{V}_i, d_i) \in \mathcal{D} \mid d_i \leq d_i^{(k)}\}$, where $d_i^{(k)}$ is the $k$-th smallest $d_i$ in $\mathcal{D}$\;
\Return $\mathbf{R}$\;
\end{algorithm}

\subsubsection{Search Execution in \NaiveBioVSS}
Binary codes offer computational efficiency by enabling rapid bitwise operations, which are natively supported by modern CPU architectures \cite{mula2018faster}. This results in substantial acceleration compared to conventional floating-point arithmetic.

Algorithm 2 presents \NaiveBioVSS procedure for performing top-$k$ search within vector set databases. The inputs include query set $\mathbf{Q}$, target result count $k$, database $\mathbf{D}$, its associated sparse binary representations $\mathbf{D}^\mathbf{H}$, and the \WTA parameter $L_{wta}$. The output $\mathbf{R}$ comprises top-$k$ vector sets most similar to the query. The algorithm first encodes the query set $\mathbf{Q}$ into its sparse binary form $\mathbf{Q}^\mathbf{H}$ (lines 2-5). To take advantage of CPU-optimized bit-level computation, a Hamming-based Hausdorff distance $Haus^H$ is computed (using $hamming(\mathbf{q}, \mathbf{v})$ \cite{hamming_dist} in place of $dist$ in Definition \ref{def:Hausdorff}) between $\mathbf{Q}^\mathbf{H}$ and each binary code set $\mathbf{H}_j$ in $\mathbf{D}^\mathbf{H}$ (lines 6-8). This step yields a candidate set $\mathcal{F}$ of potentially similar vector sets based on binary code proximity (line 9). In the refinement phase, the algorithm evaluates the exact Hausdorff distance (see Definition \ref{def:Hausdorff}) between $\mathbf{Q}$ and each candidate set $\mathbf{V}_i$ in the original vector space (lines 10-13). It then selects the $k$ closest matches (line 14) and returns the final result set $\mathbf{R}$ (line 15).

\subsection{Theoretical Analysis of Algorithm Correctness}
\label{sec: theory}
This section presents a theoretical analysis of the probabilistic guarantees for result correctness in \NaiveBioVSS. We define a function constraining similarity measure bounds, followed by upper and lower tail probability bounds for similarity comparisons. Through several lemmas, we construct our theoretical framework, culminating in Theorem \ref{theorem: ultimate}, which establishes the relationship between the error rate $\delta$ and $L=L_{wta}$. 

\myparagraph{Assumptions}
\label{sec:assumptions}
Our framework assumes all vectors undergo $L2$ normalization. This allows us to define similarity between vectors $\mathbf{q}$ and $\mathbf{v}$ as their inner product: $sim(\mathbf{q}, \mathbf{v}) = \mathbf{q}^T\mathbf{v}$. We denote the query vector set and another vector set as $\mathbf{Q}$ and $\mathbf{V}$, respectively. To facilitate proof, we define $Sim_{Haus}(\mathbf{Q}, \mathbf{V}) = \min(\min_{\mathbf{q} \in \mathbf{Q}} \max_{\mathbf{v} \in \mathbf{V}} sim(\mathbf{q}, \mathbf{v}), \min_{\mathbf{v} \in \mathbf{V}} \max_{\mathbf{q} \in \mathbf{Q}} sim(\mathbf{q}, \mathbf{v}))$. The similarity metric is equivalent to the Hausdorff distance for $L2$-normalized vectors.

\subsubsection{Bounds on Min-Max Similarity Scores} we begin by introducing upper and lower bounds for a function $\sigma$, which forms the foundation for our subsequent proofs.

\begin{lemma} 
\label{lemma:sigma_bounds}
Consider a matrix $\mathbf{S} = [s_{ij}] \in \mathbb{R}^{m_q \times m}$ containing similarity scores between a query vector set $\mathbf{Q}$ and a target vector set $\mathbf{T}$, where $|\mathbf{Q}| = m_q$ and $|\mathbf{T}| = m$. Define a function $\sigma: \mathbb{R}^{m_q \times m} \to \mathbb{R}$ as
$\sigma(\mathbf{S}) = \min(\min_i \max_j s_{ij}, \min_j \max_i s_{ij})$
Then, $\sigma(\mathbf{S})$ satisfies the following bounds:
$
\min_{i,j} s_{ij} \leq \sigma(\mathbf{S}) \leq \max_{i,j} s_{ij}.
$
\end{lemma}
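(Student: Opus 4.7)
The plan is to bound each of the two inner min-max quantities $\min_i \max_j s_{ij}$ and $\min_j \max_i s_{ij}$ separately against $\min_{i,j} s_{ij}$ and $\max_{i,j} s_{ij}$, and then combine them via the outer $\min(\cdot,\cdot)$ to obtain $\sigma(\mathbf{S})$. The argument is a routine chain of elementary min/max inequalities, so no clever maneuver should be required; the only care needed is to keep the quantifier order straight.

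For the upper bound, I would first observe that for every row index $i$, $\max_j s_{ij} \leq \max_{i,j} s_{ij}$, since maximizing over a single row is dominated by maximizing over the whole matrix. Taking the minimum over $i$ on the left side preserves the inequality, giving $\min_i \max_j s_{ij} \leq \max_{i,j} s_{ij}$. The symmetric argument with the roles of rows and columns swapped yields $\min_j \max_i s_{ij} \leq \max_{i,j} s_{ij}$. Since $\sigma(\mathbf{S})$ is defined as the minimum of these two quantities, it also satisfies $\sigma(\mathbf{S}) \leq \max_{i,j} s_{ij}$.

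For the lower bound, I would argue dually: for every fixed $i$, $\max_j s_{ij} \geq s_{ij'}$ for each $j'$, and in particular $\max_j s_{ij} \geq \min_{i',j'} s_{i'j'} = \min_{i,j} s_{ij}$. Taking the minimum over $i$ preserves this lower bound, so $\min_i \max_j s_{ij} \geq \min_{i,j} s_{ij}$, and symmetrically $\min_j \max_i s_{ij} \geq \min_{i,j} s_{ij}$. Because both terms inside the outer minimum are at least $\min_{i,j} s_{ij}$, their minimum $\sigma(\mathbf{S})$ inherits this lower bound, completing the sandwich.

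The main (and rather mild) obstacle is simply bookkeeping: making sure that a min-of-maxes is compared against a min-of-matrix or max-of-matrix in the correct direction, since novices often confuse $\min_i \max_j s_{ij}$ with $\max_j \min_i s_{ij}$. Once the two one-sided inequalities are set up cleanly, the conclusion follows immediately from the monotonicity of $\min$ in each argument, and no probabilistic or structural properties of $\mathbf{Q}$ and $\mathbf{T}$ need to be invoked.
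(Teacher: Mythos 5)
Your proposal is correct and follows essentially the same route as the paper's proof: bound each of $\min_i \max_j s_{ij}$ and $\min_j \max_i s_{ij}$ individually by $\min_{i,j} s_{ij}$ from below and $\max_{i,j} s_{ij}$ from above, then pass through the outer minimum. No discrepancies or gaps to note.
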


\begin{proof}
Let $a = \min_i \max_j s_{ij}$ and $b = \min_j \max_i s_{ij}$. Then $\sigma(\mathbf{S}) = \min(a, b)$.

For the lower bound:
$
    \forall i,j: s_{ij} \leq \max_j s_{ij} \Rightarrow \min_{i,j} s_{ij} \leq \min_i \max_j s_{ij} = a.
$
Similarly, $\min_{i,j} s_{ij} \leq b$. Therefore, $\min_{i,j} s_{ij} \leq \min(a, b) = \sigma(\mathbf{S})$.

For the upper bound:
$
    \forall i: \max_j s_{ij} \leq \max_{i,j} s_{ij} \Rightarrow a = \min_i \max_j s_{ij} \leq \max_{i,j} s_{ij}.
$
Similarly, $b \leq \max_{i,j} s_{ij}$. Therefore, $\sigma(\mathbf{S}) = \min(a, b) \leq \max_{i,j} s_{ij}$.

Thus, we have $\min_{i,j} s_{ij} \leq \sigma(\mathbf{S}) \leq \max_{i,j} s_{ij}$.
\end{proof}

\subsubsection{Upper Tail Probability Bound}
we now establish the upper tail probability bound for our relevance score function, which forms the foundation for our subsequent proofs.

\begin{lemma}
\label{lemma:max_similarity_bound}
Consider a similarity matrix $\mathbf{S} \in \mathbb{R}^{m_q \times m}$ between a query vector set and a target vector set. Define the maximum similarity score in $\mathbf{S}$ as $s_{\max} = \max_{i,j} s_{ij}$. Given an estimated similarity matrix $\hat{\mathbf{S}}$ of $\mathbf{S}$ and a threshold $\tau_1 \in (s_{\max}, 1)$, we define $\Delta_1 = \tau_1 - s_{\max}$. Then, the following inequality holds:
\begin{equation*}
\operatorname{Pr}\left[\sigma(\hat{\mathbf{S}}) \geq s_{\max }+\Delta_1\right] \leq m_q m \gamma^L,
\end{equation*}
for $\gamma=\left(\frac{s_{\max }(1-\tau_1)}{\tau_1\left(1-s_{\max }\right)}\right)^{\tau_1}\left(\frac{1-s_{\max }}{1-\tau_1}\right).$
Here, $\sigma(\cdot)$ is the operator defined in Lemma \ref{lemma:sigma_bounds}, and $L \in \mathbb{Z}^+$ represents the number of hash functions.
\end{lemma}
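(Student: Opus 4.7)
The plan is to reduce the bound to a single-entry Chernoff--Hoeffding tail estimate in three successive steps: a structural reduction via Lemma \ref{lemma:sigma_bounds}, a union bound over the $m_q \cdot m$ entries of $\hat{\mathbf{S}}$, and a binomial tail estimate exploiting the LSH property from Definition \ref{def_LSH}, followed by a short monotonicity argument that lets the final expression depend only on $s_{\max}$ rather than on each individual $s_{ij}$.

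The first two steps are immediate. Applying Lemma \ref{lemma:sigma_bounds} to $\hat{\mathbf{S}}$ gives $\sigma(\hat{\mathbf{S}}) \leq \max_{i,j} \hat{s}_{ij}$, so the event $\{\sigma(\hat{\mathbf{S}}) \geq s_{\max} + \Delta_1\} = \{\sigma(\hat{\mathbf{S}}) \geq \tau_1\}$ is contained in $\{\max_{i,j} \hat{s}_{ij} \geq \tau_1\}$; a union bound then reduces the task to estimating $\operatorname{Pr}[\hat{s}_{ij} \geq \tau_1]$ for a single index pair $(i,j)$.

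For the per-entry bound, I would observe that $\hat{s}_{ij}$ is naturally realised as the empirical collision frequency of $\mathbf{q}_i$ and $\mathbf{t}_j$ across the $L$ independent hash functions composing $\mathcal{H}$. By Definition \ref{def_LSH}, each collision indicator is Bernoulli with parameter exactly $s_{ij}$, so $L \hat{s}_{ij} \sim \operatorname{Binomial}(L, s_{ij})$. Since $s_{ij} \leq s_{\max} < \tau_1$, the event $\{\hat{s}_{ij} \geq \tau_1\}$ is an upper-tail deviation, and the classical Chernoff--Hoeffding bound yields
$$
\operatorname{Pr}[\hat{s}_{ij} \geq \tau_1] \;\leq\; \left(\frac{s_{ij}}{\tau_1}\right)^{\tau_1 L} \left(\frac{1 - s_{ij}}{1 - \tau_1}\right)^{(1 - \tau_1) L}.
$$

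Finally, I would bound the right-hand side by $\gamma^L$. Treating the Chernoff factor as a function of the collision parameter $p$, its logarithm has derivative $(\tau_1 - p)/[p(1-p)]$, which is non-negative on $[0, \tau_1)$; substituting $p = s_{\max}$ therefore dominates every term in the union bound. A straightforward regrouping of the two factors with exponents $\tau_1 L$ and $(1-\tau_1) L$ then identifies the product as $\gamma^L$ with $\gamma$ exactly as stated in the lemma, and combining with the $m_q m$ union-bound factor gives the desired $m_q m \gamma^L$. The main technical subtlety is the monotonicity check itself: the two factors $(p/\tau_1)^{\tau_1 L}$ and $((1-p)/(1-\tau_1))^{(1-\tau_1) L}$ move in opposite directions as $p$ grows, so one must rely on the KL-divergence structure to confirm the net bound is monotone; the remaining steps are mechanical applications of standard tools.
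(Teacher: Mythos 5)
Your proposal is correct and follows essentially the same route as the paper: a reduction via Lemma \ref{lemma:sigma_bounds} to the maximum entry, a union bound over the $m_q m$ entries, and a Chernoff bound on the binomial collision counts with parameter $s_{\max}$, yielding exactly the stated $\gamma$ (your KL-form bound $\left(\frac{s_{\max}}{\tau_1}\right)^{\tau_1 L}\left(\frac{1-s_{\max}}{1-\tau_1}\right)^{(1-\tau_1)L}$ equals $\gamma^L$ after regrouping). The only cosmetic difference is that you apply the union bound at the event level and invoke the pre-optimized Chernoff--Hoeffding tail plus a monotonicity-in-$p$ check, whereas the paper bounds the moment generating function of the maximum and optimizes the Chernoff parameter $t$ explicitly.
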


\begin{proof}
Applying a generic Chernoff bound to $\sigma(\hat{\mathbf{S}})$ yields the following bounds for any $t>0$:
\begin{align*}
\operatorname{Pr}[\sigma(\hat{\mathbf{S}}) \geq \tau_1] &=  \operatorname{Pr}\left[e^{t \sigma(\hat{\mathbf{S}})} \geq e^{t \tau_1}\right]
\leq \frac{\mathbb{E}\left[e^{t \sigma(\hat{\mathbf{S}})}\right]}{e^{t \tau_1}}.
\end{align*}

Given an \LSH family (see Definition \ref{def: FlyHash}), the estimated maximum similarity $\hat{s}_{\max}$ follows a scaled binomial distribution with parameters $s_{\max}$ and $L^{-1}$, i.e., $\hat{s}_{\max } \sim L^{-1} \mathcal{B}\left(s_{\max }, L\right)$. Substituting the binomial moment generating function into the expression and using Lemma \ref{lemma:sigma_bounds}, we can bound the numerator:
\begin{align*}
\mathbb{E}\left[e^{t \sigma(\hat{\mathbf{S}})}\right] &\leq \mathbb{E}\left[e^{t \max_{1 \leq i \leq m_q, 1 \leq j \leq m} \hat{S}_{ij}}\right] \\
&\leq m_q m \mathbb{E}\left[e^{t \hat{s}_{\max }}\right] 
= m_q m \left(1-s_{\max }+s_{\max } e^{\frac{t}{L}}\right)^L.
\end{align*}

Combining the Chernoff bound and the numerator bound yields:
$
\operatorname{Pr}[\sigma(\hat{\mathbf{S}}) \geq \tau_1] \leq m_q m e^{-t \tau_1}\left(1-s_{\max }+s_{\max } e^{\frac{t}{L}}\right)^L.
$

To find the tightest upper tail probability bound, we can determine the infimum by setting the derivative of the bound for $t$ equal to zero. Then, we get:
$
t^{\star}=L \ln \left(\frac{\tau_1\left(1-s_{\max }\right)}{s_{\max }(1-\tau_1)}\right).
$

Since $\tau_1 \in (s_{\max}, 1)$, the numerator of the fraction inside the logarithm is greater than the denominator, ensuring that $t^{\star} > 0$. This result allows us to obtain the tightest bound.

Substituting $t=t^{\star}$ into the bound, we obtain:
$$
\operatorname{Pr}[\sigma(\hat{\mathbf{s}}) \geq \tau_1] \leq m_q m \left(\left(\frac{\tau_1\left(1-s_{\max }\right)}{s_{\max }(1-\tau_1)}\right)^{-\tau_1}\left(\frac{1-s_{\max }}{1-\tau_1}\right)\right)^L.
$$

Thus we have:
$
\gamma=\left(\frac{s_{\max }(1-\tau_1)}{\tau_1\left(1-s_{\max }\right)}\right)^{\tau_1}\left(\frac{1-s_{\max }}{1-\tau_1}\right).
$
\end{proof}

\subsubsection{Lower Tail Probability Bound}
we now establish the lower tail probability bound for our relevance score function, which forms the foundation for our subsequent proofs.

\begin{lemma}
\label{lemma:min_similarity_bound}
Consider a similarity matrix $\mathbf{S} \in \mathbb{R}^{m_q \times m}$ between a query vector set and a target vector set. Define the minimum similarity score in $\mathbf{S}$ as $s_{\min} = \min_{i,j} s_{ij}$. Given an estimated similarity matrix $\hat{\mathbf{S}}$ of $\mathbf{S}$ and a threshold $\tau_2 \in (0, s_{\min})$, we define $\Delta_2 = s_{\min}-\tau_2$. Then, the following inequality holds:
\begin{equation*}
\operatorname{Pr}\left[\sigma(\hat{\mathbf{s}}) \leq s_{\min }-\Delta_2 \right] \leq m_q m \xi^L, 
\end{equation*}
for $\gamma=\left(\frac{s_{\min }(1-\tau_2)}{\tau_2\left(1-s_{\min }\right)}\right)^{\tau_2}\left(\frac{1-s_{\min }}{1-\tau_2}\right).$
Here, $\sigma(\cdot)$ is the operator defined in Lemma \ref{lemma:sigma_bounds}, and $L \in \mathbb{Z}^+$ represents the number of hash functions.
\end{lemma}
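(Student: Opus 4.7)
The plan is to mirror the argument of Lemma~\ref{lemma:max_similarity_bound}, but reorient the Chernoff bound toward the left tail. Concretely, for any $t > 0$, I would rewrite the event $\{\sigma(\hat{\mathbf{S}}) \leq \tau_2\}$ as $\{e^{-t\sigma(\hat{\mathbf{S}})} \geq e^{-t\tau_2}\}$ and invoke Markov's inequality to obtain
\begin{equation*}
\Pr\!\left[\sigma(\hat{\mathbf{S}}) \leq \tau_2\right] \;\leq\; e^{t\tau_2}\,\mathbb{E}\!\left[e^{-t\sigma(\hat{\mathbf{S}})}\right].
\end{equation*}

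Next, I would invoke the \emph{lower} half of Lemma~\ref{lemma:sigma_bounds}, namely $\sigma(\hat{\mathbf{S}}) \geq \min_{i,j}\hat{S}_{ij}$. Since multiplying by $-t < 0$ flips the inequality, this yields $e^{-t\sigma(\hat{\mathbf{S}})} \leq \max_{i,j} e^{-t\hat{S}_{ij}} \leq \sum_{i,j} e^{-t\hat{S}_{ij}}$, and taking expectations introduces the $m_q m$ union-bound factor. Because each estimator $\hat{S}_{ij}$ follows $L^{-1}\mathcal{B}(S_{ij}, L)$ under the \LSH family, and the moment generating function $(1-p+p e^{-t/L})^L$ is monotonically \emph{decreasing} in $p$ for $t > 0$, every entrywise MGF is dominated by that of $\hat{s}_{\min} \sim L^{-1}\mathcal{B}(s_{\min}, L)$. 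Chaining these ingredients gives the preliminary bound
\begin{equation*}
\Pr\!\left[\sigma(\hat{\mathbf{S}}) \leq \tau_2\right] \;\leq\; m_q m\, e^{t\tau_2}\,\bigl(1 - s_{\min} + s_{\min} e^{-t/L}\bigr)^L .
\end{equation*}

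The remaining task is to minimize the right-hand side over $t > 0$. Differentiating its logarithm and setting the derivative to zero produces the stationary point
\begin{equation*}
t^{\star} \;=\; L\ln\!\left(\frac{s_{\min}(1-\tau_2)}{\tau_2(1-s_{\min})}\right),
\end{equation*}
which is strictly positive because the hypothesis $\tau_2 \in (0, s_{\min})$ forces the argument of the logarithm above $1$. Substituting $t^{\star}$ collapses the MGF factor to $\bigl((1-s_{\min})/(1-\tau_2)\bigr)^L$ and raises the first ratio to the $\tau_2$-th power, delivering the advertised $m_q m\,\xi^L$ bound after grouping the $L$-th powers.

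The main obstacle I anticipate is bookkeeping the signs carefully throughout: one must use the \emph{lower} half of Lemma~\ref{lemma:sigma_bounds} (not the upper half used in Lemma~\ref{lemma:max_similarity_bound}), recognize that the monotonicity of the binomial MGF in $p$ reverses when the exponent is negated, and verify that $t^{\star} > 0$, since a nonpositive optimizer would render the Chernoff step vacuous. This positivity is precisely what the assumption $\tau_2 < s_{\min}$ secures, so the argument ultimately parallels the upper-tail proof with every inequality flipped in a controlled way.
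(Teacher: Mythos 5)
Your proposal is correct and follows essentially the same route as the paper's proof: a Chernoff/Markov bound on the left tail, the lower half of Lemma~\ref{lemma:sigma_bounds} ($\sigma(\hat{\mathbf{S}}) \geq \min_{i,j}\hat{S}_{ij}$), a union bound yielding the $m_q m$ factor, domination by the scaled-binomial MGF of $\hat{s}_{\min}$, and optimization of the exponent. The only difference is cosmetic — the paper parametrizes with $t<0$ and exponent $+t$, so its optimizer is the negative of your $t^{\star}$ — and both yield the same $\xi$ (note the lemma statement's ``$\gamma$'' is a typo for $\xi$, which you handle correctly).
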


\begin{proof}
Applying a generic Chernoff bound to $\sigma(\hat{\mathbf{S}})$ yields the following inequality for any $t<0$:
\begin{align*}
\operatorname{Pr}[\sigma(\hat{\mathbf{S}}) \leq \tau_2] &= \operatorname{Pr}\left[e^{t \sigma(\hat{\mathbf{S}})} \geq e^{t \tau_2}\right] 
\leq \frac{\mathbb{E}\left[e^{t \sigma(\hat{\mathbf{S}})}\right]}{e^{t \tau_2}}.
\end{align*}

Given an \LSH family (see Definition \ref{def: FlyHash}), the estimated minimum similarity $\hat{s}_{\min}$ follows a scaled binomial distribution with parameters $s_{\min}$ and $L^{-1}$, i.e., $\hat{s}_{\min } \sim L^{-1} \mathcal{B}\left(s_{\min }, L\right)$. Substituting the binomial moment generating function into the expression and using Lemma \ref{lemma:sigma_bounds}, we can bound the numerator:
\begin{align*}
\mathbb{E}\left[e^{t \sigma(\hat{\mathbf{S}})}\right] &\leq \mathbb{E}\left[e^{t \min_{1 \leq i \leq m_q, 1 \leq j \leq m} \hat{S}_{ij}}\right] \\
&\leq m_q m \mathbb{E}\left[e^{t \hat{s}_{\min }}\right] 
= m_q m \left(1-s_{\min }+s_{\min } e^{\frac{t}{L}}\right)^L.
\end{align*}

Combining the Chernoff bound and the numerator bound yields:
$
\operatorname{Pr}[\sigma(\hat{\mathbf{S}}) \leq \tau_2] \leq m_q m e^{-t \tau_2}\left(1-s_{\min }+s_{\min } e^{\frac{t}{L}}\right)^L.
$

To find the tightest upper bound, we can determine the infimum by setting the derivative of the upper bound with respect to $t$ equal to zero. Then, we obtain:
$
t^{\star}=L \ln \left(\frac{\tau_2\left(1-s_{\min }\right)}{s_{\min }(1-\tau_2)}\right).
$

Since $\tau_2 \in (0, s_{\min})$, the numerator of the fraction inside the logarithm is greater than the denominator, ensuring that $t^{\star} < 0$. This result allows us to obtain the tightest upper bound.

Similar to Lemma \ref{lemma:max_similarity_bound}, we get:
$
\xi=\left(\frac{s_{\min }(1-\tau_2)}{\tau_2\left(1-s_{\min }\right)}\right)^{\tau_2}\left(\frac{1-s_{\min }}{1-\tau_2}\right).
$
\end{proof}

\subsubsection{Probabilistic Correctness for Search Results}
this section presents a theorem establishing the probabilistic guarantees to the approximate top-$k$ vector set search problem.  

\begin{theorem}
\label{theorem: ultimate}
Let $\mathbf{V}_{\alpha}^{\star}$ be one of the top-$k$ vector sets that minimize the Hausdorff distance $Haus(\mathbf{Q}, \mathbf{V})$, i.e., $\mathbf{V}_{\alpha}^{\star} \in \operatorname{argmin}^k_{\mathbf{V} \in \mathbf{D}} Haus(\mathbf{Q}, \mathbf{V})$, and $\mathbf{V}_{\beta}$ be any other vector set in the database $\mathbf{D}$. By applying Lemma \ref{lemma:sigma_bounds}, we obtain the following upper and lower bounds for $Haus(\mathbf{Q}, \mathbf{V}_{\alpha}^{\star})$ and $Haus(\mathbf{Q}, \mathbf{V}_{\beta})$:
$$
B_{\alpha}^{\star}= 1 - \min_{i,j}{S_{ij}^{\alpha}} = 1- s_{\alpha, \min}, \quad
B_{\beta}= 1 - \max_{i,j}{S_{ij}^{\beta}}=1 - s_{\beta, \max},
$$
where $S_{ij}^{\alpha}$ and $S_{ij}^{\beta}$ denote the element-wise similarity matrices between each vector in the query set $\mathbf{Q}$ and each vector in $\mathbf{V}_{\alpha}^{\star}$ and $\mathbf{V}_{\beta}$.

Let $B^{\prime} = \min_{\beta} B_{\beta}$ be the minimum value of $B_{\beta}$ over any vector set $\mathbf{V}_{\beta}$ not in the top-$k$ results, and let $B^{\star} = \max_{\alpha}B_{\alpha}^{\star}$ be the maximum value of $B_{\alpha}^{\star}$ over any top-$k$ vector set $\mathbf{V}_{\alpha}^{\star}$. Define $\Delta_1 + \Delta_2$ as follows:
$$
B_{\beta} - B_{\alpha}^{\star}  =  s_{\alpha, \min} - s_{\beta, \max}  \geq  
B^{\prime} - B^{\star}= 2(\Delta_1 + \Delta_2).
$$
If $\Delta_1>0$ and $\Delta_2>0$, a hash structure with $L=O\left(\log \left(\frac{n m_q m}{\delta}\right)\right)$ solves the approximate top-$k$ vector set search problem (Definition \ref{ApproximateSearch}) with probability at least $1-\delta$, where $n=|\mathbf{D}|$, $m_q=|\mathbf{Q}|$, $m=|\mathbf{V}_i|$ (Assume \( m \) is invariant across all \(\mathbf{V}_i\)), and $\delta$ is the failure probability.
\end{theorem}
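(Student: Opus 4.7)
The plan is to combine the two tail bounds from Lemma \ref{lemma:max_similarity_bound} and Lemma \ref{lemma:min_similarity_bound} via a union bound over the $n$ database sets, and then translate the resulting threshold on the number of hash functions into the $O(\log(nm_q m/\delta))$ form.

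First I would fix a top-$k$ set $\mathbf{V}_\alpha^{\star}$ with minimum similarity $s_{\alpha,\min}$ and a non-top-$k$ set $\mathbf{V}_\beta$ with maximum similarity $s_{\beta,\max}$. Applying Lemma \ref{lemma:min_similarity_bound} to $\mathbf{V}_\alpha^{\star}$ with threshold $\tau_2 = s_{\alpha,\min}-\Delta_2$ gives $\Pr[\sigma(\hat{\mathbf{S}}^{\alpha}) \leq s_{\alpha,\min}-\Delta_2] \leq m_q m\,\xi^L$, and applying Lemma \ref{lemma:max_similarity_bound} to $\mathbf{V}_\beta$ with threshold $\tau_1 = s_{\beta,\max}+\Delta_1$ gives $\Pr[\sigma(\hat{\mathbf{S}}^{\beta}) \geq s_{\beta,\max}+\Delta_1] \leq m_q m\,\gamma^L$. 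Outside both bad events, the estimated similarity of the top-$k$ set is at least $s_{\alpha,\min}-\Delta_2$, while that of any non-top-$k$ set is at most $s_{\beta,\max}+\Delta_1$.

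Second, I would use the hypothesis $s_{\alpha,\min}-s_{\beta,\max} \geq B' - B^{\star} = 2(\Delta_1+\Delta_2)$ to show these two intervals are separated by a strictly positive margin:
\begin{equation*}
(s_{\alpha,\min}-\Delta_2) - (s_{\beta,\max}+\Delta_1) \;\geq\; 2(\Delta_1+\Delta_2) - (\Delta_1+\Delta_2) \;=\; \Delta_1+\Delta_2 \;>\; 0.
\end{equation*}
Hence, with high probability, every top-$k$ set's estimated similarity strictly exceeds that of every non-top-$k$ set, so the ranking (and therefore the returned top-$k$) under the estimated similarity coincides with the exact Hausdorff ranking (via $B = 1-s$, since larger similarity corresponds to smaller distance under $L_2$-normalization).

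Third, I would take a union bound over at most $n$ database sets (and, if desired, over the pair of tail events per set), obtaining a total failure probability bounded by $n\, m_q m\,(\gamma^L+\xi^L) \leq 2 n\, m_q m\, \rho^L$ with $\rho=\max(\gamma,\xi)<1$. Requiring this to be at most $\delta$ yields
\begin{equation*}
L \;\geq\; \frac{\log(2 n\, m_q m/\delta)}{\log(1/\rho)} \;=\; O\!\left(\log\!\frac{n m_q m}{\delta}\right),
\end{equation*}
since $\Delta_1,\Delta_2>0$ makes $\log(1/\rho)$ a positive constant independent of $n, m_q, m, \delta$.

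The main obstacle I anticipate is justifying that the two lemmas, which bound $\sigma(\hat{\mathbf{S}})$ using thresholds shifted from $s_{\max}$ and $s_{\min}$, can be simultaneously invoked across every pair of top-$k$ and non-top-$k$ sets so that the separation margin $\Delta_1+\Delta_2$ is uniform; this is handled by taking $B^{\star}=\max_\alpha B_\alpha^{\star}$ and $B'=\min_\beta B_\beta$ as worst-case anchors, which is exactly how the theorem defines $\Delta_1+\Delta_2$. A minor subtlety is verifying that $\gamma,\xi<1$ strictly whenever $\Delta_1,\Delta_2>0$, which follows from the strict convexity of the Chernoff optimization in Lemmas \ref{lemma:max_similarity_bound} and \ref{lemma:min_similarity_bound}.
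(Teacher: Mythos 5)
Your proposal is correct and follows essentially the same route as the paper: invoke Lemma \ref{lemma:min_similarity_bound} for each top-$k$ set and Lemma \ref{lemma:max_similarity_bound} for each non-top-$k$ set, use the $2(\Delta_1+\Delta_2)$ gap to separate the estimated similarities by a positive margin $\Delta_1+\Delta_2$, and close with a union bound to get $L = O(\log(n m_q m/\delta))$. The only difference is bookkeeping — the paper splits the budget into $\delta/2(n-k)$ per non-top-$k$ set and $\delta/2k$ per top-$k$ set with separate constants $\gamma_{\max},\xi_{\max}$, while you fold both into a single $\rho=\max(\gamma,\xi)$ — which does not change the argument.
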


\begin{proof}
We aim to determine the value of $L$ that satisfies both tail bounds and combine them to obtain the final $L$.

\myparagraph{Upper Tail Probability Bound} For $\mathbf{V}_{\beta} \notin \operatorname{argmin}^k_{\mathbf{V} \in \mathbf{D}} Haus(\mathbf{Q}, \mathbf{V})$, Lemma \ref{lemma:max_similarity_bound} yields:
$$
\operatorname{Pr}\left[\sigma(\hat{\mathbf{S}}_{\beta}) \geq s_{\beta, \max}+\Delta_1\right] \leq m_q m \gamma_{\beta}^L,
$$

where $\gamma_{\beta}=\left(\frac{s_{\beta, \max}(1-\tau_1)}{\tau_1\left(1-s_{\beta, \max}\right)}\right)^{\tau_1}\left(\frac{1-s_{\beta, \max}}{1-\tau_1}\right)$. 

To simplify the analysis, we consider $\gamma_{\max} = \max_{\mathbf{V}_{\beta}} \gamma_{\beta}$, allowing us to express all bounds using the same $\gamma_{\max}$. To ensure that the probability of the bound holding for all $n-k$ non-top-$k$ sets is $\frac{\delta}{2}$, we choose $L$ such that $L\geq \log \frac{2(n-k) m_q m}{\delta} \left(\log \frac{1}{\gamma_{\max }}\right)^{-1}$. Consequently,
\begin{align*}
\Pr\left[\sigma(\hat{\mathbf{S}}_{\beta}) \geq s_{\beta, \max}+\Delta_1\right] 
& \leq m_q m \gamma_{\beta}^L \leq m_q m \gamma_{\max}^L \\
& \leq m_q m \gamma_{\max}^{\log \frac{2(n-k) m_q m}{\delta} \left(\log \frac{1}{\gamma_{\max }}\right)^{-1}}\\
&=\frac{\delta}{2(n-k)}.
\end{align*}

\myparagraph{Lower Tail Probability Bound} For $\mathbf{V}_{\alpha}^{\star} \in \operatorname{argmin}^k_{\mathbf{V} \in \mathbf{D}} Haus(\mathbf{Q}, \mathbf{V})$, Lemma \ref{lemma:min_similarity_bound} yields:
$$
\operatorname{Pr}\left[\sigma(\hat{\mathbf{S}}_{\alpha}) \leq s_{\alpha, \min }-\Delta_2 \right] \leq m_q m \xi_{\alpha}^L,
$$
where $\xi_{\alpha}= \left(\frac{s_{\alpha, \min}(1-\tau_2)}{\tau_2\left(1-s_{\alpha, \min}\right)}\right)^{\tau_2}\left(\frac{1-s_{\alpha, \min}}{1-\tau_2}\right)$.

Following a similar approach as for the upper bound, we ensure that the probability of the bound holding for all top-$k$ sets is $\frac{\delta}{2}$ by selecting $L$ such that $L\geq \log \frac{2k m_q m}{\delta} \left(\log \frac{1}{\xi_{\max }}\right)^{-1}$, where $\xi_{\max} = \max_{\alpha} \xi_{\alpha}$. As a result,
\begin{align*}
\Pr\left[\sigma(\hat{\mathbf{S}}_{\alpha}) \leq s_{\alpha, \min}-\Delta_2\right] 
&\leq m_q m \xi_{\alpha}^L 
\leq m_q m \xi_{\max}^L \\
&\leq m_q m \xi_{\max}^{\log \frac{2k m_q m}{\delta} \left(\log \frac{1}{\xi_{\max }}\right)^{-1}} \\
&=\frac{\delta}{2k}.
\end{align*}

\myparagraph{Combining the Bounds}
Let 
$
L=\max \left( \frac{\log \frac{2(n-k) m_q m}{\delta}}{\log \left(\frac{1}{\gamma_{\max }}\right)} , \frac{\log \frac{2k m_q m}{\delta}}{\log \left(\frac{1}{\xi_{\max }}\right)}\right),
$
we calculate the final $L$ by combining the upper and lower tail probability bounds. Let $\vmathbb{1}$ be an indicator random variable that equals 1 when all the upper and lower tail
probability bounds are satisfied, and 0 otherwise. The probability of solving the approximate top-$k$
vector set problem is equivalent to the probability that:

\begin{footnotesize}
\begin{equation*}
\begin{aligned}
& \operatorname{Pr}\left(\forall_{\alpha, \beta}\left(\hat{H}\left(\mathbf{Q}, \mathbf{V}_{\alpha}^{\star} \right)-\hat{H}\left(\mathbf{Q}, \mathbf{V}_{\beta}\right)<0\right)\right) \\
&=\operatorname{Pr}\left(\forall_{\alpha, \beta}\left(\min\left(\min_{\mathbf{q} \in \mathbf{Q}} \max_{\mathbf{v} \in \mathbf{V}_{\alpha}^{\star}}(sim(\mathbf{q}, \mathbf{v})), \min_{\mathbf{v} \in \mathbf{V}_{\alpha}^{\star}} \max_{\mathbf{q} \in \mathbf{Q}}(sim(\mathbf{q}, \mathbf{v}))\right) \right.\right. \\
& \left.\left. \quad - \min\left(\min_{\mathbf{q} \in \mathbf{Q}} \max_{\mathbf{v} \in \mathbf{V}_{\beta}}(sim(\mathbf{q}, \mathbf{v})), \min_{\mathbf{v} \in \mathbf{V}_{\beta}} \max_{\mathbf{q} \in \mathbf{Q}}(sim(\mathbf{q}, \mathbf{v}))\right) > 0\right)\right) \hfill \text{} \\
&=\operatorname{Pr}\left(\forall_{\alpha, \beta}\left(\min\left(\min_{1 \leq i \leq m_q} \max_{1 \leq j \leq m} S_{ij}^{\alpha}, \min_{1 \leq j \leq m} \max_{1 \leq i \leq m_q} S_{ij}^{\alpha}\right) \right.\right. \\
& \left.\left. \quad - \min\left(\min_{1 \leq i \leq m_q} \max_{1 \leq j \leq m} S_{ij}^{\beta}, \min_{1 \leq j \leq m} \max_{1 \leq i \leq m_q} S_{ij}^{\beta}\right) > 0\right)\right)  \\
& \ge \operatorname{Pr}\left(\forall_{\alpha, \beta}\left( \min\left(\min_{1 \leq i \leq m_q} \max_{1 \leq j \leq m} S_{ij}^{\alpha}, \min_{1 \leq j \leq m} \max_{1 \leq i \leq m_q} S_{ij}^{\alpha}\right) \right.\right. \hfill \text{} \\
& \left.\left. \quad - \min\left(\min_{1 \leq i \leq m_q} \max_{1 \leq j \leq m} S_{ij}^{\beta}, \min_{1 \leq j \leq m} \max_{1 \leq i \leq m_q} S_{ij}^{\beta}\right) > 0
\mid \vmathbb{1}=1 \right)\right) \operatorname{Pr}(\vmathbb{1}=1) \\
& \ge \operatorname{Pr}\left(\forall_{\alpha, \beta}\left(\min{S_{ij}^{\alpha}} - \Delta_2 - (\max{S_{ij}^{\beta}} + \Delta_1) >0
\mid \vmathbb{1}=1 \right)\right) \operatorname{Pr}(\vmathbb{1}=1) \hfill \text{}\\
& = \operatorname{Pr}\left(\forall_{\alpha, \beta}\left(\min{S_{ij}^{\alpha}} -\max{S_{ij}^{\beta}} > \Delta_1 + \Delta_2
\mid \vmathbb{1}=1 \right)\right) \operatorname{Pr}(\vmathbb{1}=1) \\
& = \operatorname{Pr}\left(\forall_{\alpha, \beta}\left( B_{\beta} - B_{\alpha}^{\star}>\Delta_1 + \Delta_2
\mid \vmathbb{1}=1 \right)\right) \operatorname{Pr}(\vmathbb{1}=1) \hfill \text{}\\
& \ge \operatorname{Pr}\left(\forall_{\alpha, \beta}\left(  B^{\prime} - B^{\star}>\Delta_1 + \Delta_2
\mid \vmathbb{1}=1 \right)\right) \operatorname{Pr}(\vmathbb{1}=1) \hfill \qquad \text{}\\
& \ge \operatorname{Pr}\left(\forall_{\alpha, \beta}\left( 2(\Delta_1 + \Delta_2) > \Delta_1 + \Delta_2
\mid \vmathbb{1}=1 \right)\right) \operatorname{Pr}(\vmathbb{1}=1) \hfill \quad \text{}\\
& = 1 *\operatorname{Pr}(\vmathbb{1}=1) \ge 1 -  ( (n-k) *\frac{\delta}{2(n-k)}  + \frac{\delta}{2k} * k) \qquad\qquad \text{} \\
& = 1 -  \delta.   
\end{aligned}
\end{equation*}
\end{footnotesize}

Therefore, with this choice of $L$, the algorithm effectively solves the approximate top-$k$ vector set search problem. By eliminating the data-dependent correlation terms $\gamma_{\max }$ and $\xi_{\max }$, and considering the practical scenario where $n \gg k$, then $L = O\left(\log \left(\frac{n m_q m}{\delta}\right)\right)$.
\end{proof}

\subsection{Performance Discussion}
The time complexity of \NaiveBioVSS is $O(nm^2L/w)$, where $n$ is the number of vector sets, $m$ is the number of vectors per set, $L$ is the binary vector length, and $w$ is the machine word size (typically 32 or 64 bits). The factor $L/w$ represents the number of machine words needed to store each binary vector, allowing for parallel processing of $w$ bits. This results in improved efficiency compared to real-number operations.

Despite this optimization, \NaiveBioVSS still requires an exhaustive scan. To address this limitation, we propose enhancing \NaiveBioVSS with a filter. Next, we will detail this method.

\section{Enhancing BioVSS via Cascade Filter}
\label{Sec_BioVSS}
We introduce \BioVSS, an improved approach that overcomes the linear scan bottleneck of \NaiveBioVSS by incorporating a Bio-Inspired Dual-Layer Cascade Filter (\BDLCF). The core idea behind \BioVSS lies in utilizing the structural parallels between \BioHash and Bloom filters, both of which rely on sparse representations for encoding vectors and managing set membership. In the following, we elaborate on the construction of the filter and the procedure for executing the search.

\subsection{Filter Construction} 
The bio-inspired dual-layer cascade filter serves as the core mechanism enabling efficient search within \BioVSS. \BDLCF is composed of two primary elements: an inverted index built upon count Bloom filters, and vector set sketches constructed using binary Bloom filters. The inverted index utilizes count Bloom filters to generate inverted lists, where vector sets are organized according to their respective count values. For the second-level filtering, the vector set sketches make use of binary Bloom filters, allowing for efficient candidate scanning based on Hamming distance. This two-tiered filtering strategy in \BDLCF effectively boosts search performance by progressively narrowing down the candidate pool through sequential refinement.

\begin{figure}
    \centering
    \captionsetup{aboveskip=2pt}
    \includegraphics[width=0.8\textwidth]{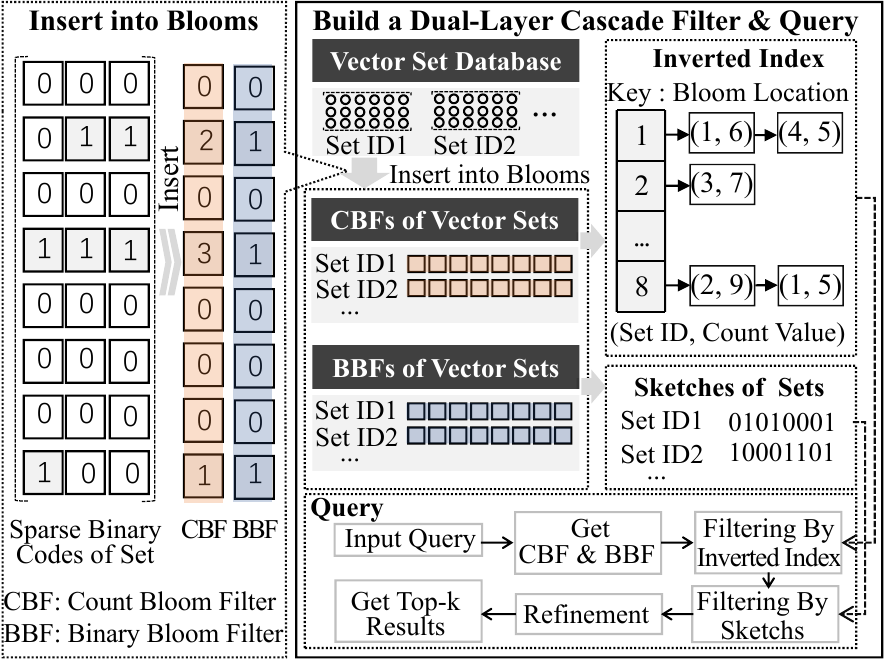}
    \caption{Filter Construction and Query Execution in \BioVSS}
    \label{fig:framework}
    % \addvspace{-10pt}
\end{figure}

\subsubsection{Inverted Index Based on Count Bloom Filter} 
The inverted index constitutes the first layer of \BDLCF. It improves search efficiency by indexing vector sets in an inverted manner, thereby reducing the number of candidate items.

To overcome the linear scanning bottleneck present in Algorithm \ref{alg:hashing}, we utilize count Bloom filters to manage vector sets, which serve as the foundation for building the inverted index.

\begin{definition1}[\textbf{Count Bloom Filter} \cite{bonomi2006improved}] Let $\mathbf{V} = \{\mathbf{v}_1, \mathbf{v}_2, ..., \mathbf{v}_m\}$ be a vector set. A \textbf{count Bloom filter} $\mathbf{C} = (c_1, c_2, ..., c_b)$ for $\mathbf{V}$ is an array of $b$ counters, where each counter $c_i$ represents the sum of the $i$-th bits of the sparse binary codes of vectors in $\mathbf{V}$. Formally, $c_i = \sum_{j=1}^m \mathcal{H}(\mathbf{v}_j)_i$, where $\mathcal{H}(\mathbf{v}_j)_i$ is the $i$-th bit of the binary code of $\mathbf{v}_j$. \end{definition1}

The count Bloom filter encodes each binary code's frequency across all vectors in the set. Next, we define the inverted index, which is built upon the count Bloom filters.

\begin{definition1}[\textbf{Inverted Index Based on Count Bloom Filter}] Let $\mathbf{D} = \{\mathbf{V}_1, \mathbf{V}_2, ..., \mathbf{V}_n\}$ be a database of vector sets. A \textbf{inverted index based on count bloom filter} $\mathbf{I}$ is a data structure that maps each bit position to a sorted list of tuples. For each position $i \in \{1, 2, ..., b\}$, $\mathbf{I}[i]$ contains a list of tuples $(j, c_i^{(j)})$, where $j$ is the index of the vector set $\mathbf{V}_j \in \mathbf{D}$, and $c_i^{(j)}$ is the $i$-th count value in the count Bloom filter for $\mathbf{V}_j$. The list is sorted in descending order based on $c_i^{(j)}$ values. \end{definition1}

\setlength{\textfloatsep}{0cm}
\begin{algorithm}[t]
\small
\caption{\texttt{Gen\_Count\_Bloom\_Filter}($\mathbf{D}, L_{wta}, b$)}
\label{alg:gen_count_bloom_filter}
\KwIn{$\mathbf{D} = \{\mathbf{V}_i\}_{i=1}^n$: vector set database, $L_{wta}$: \# of \WTA, $b$: the length of bloom filters}
\KwOut{$\mathcal{C}$: count Bloom filters}

\tcp{Generate binary codes: Algorithm \ref{alg:hashing}}
$\mathbf{D}^\mathbf{H} = \texttt{Gen\_Binary\_Codes}(\mathbf{D}, L_{wta})$\;

\tcp{Construct Count Bloom Filters}
$\mathcal{C} = \{\mathbf{C}^{(j)}\}_{j=1}^n$, where $\mathbf{C}^{(j)} \in \mathbb{N}^b$\;
\For{$j = 1$ \KwTo $n$}{
    $\mathbf{C}^{(j)} = \sum_{\mathbf{h} \in \mathbf{H}_j} \mathbf{h}$, where $\mathbf{H}_j \in \mathbf{D}^\mathbf{H}$\;
}

\Return $\mathcal{C}$\;
\end{algorithm}
\setlength{\textfloatsep}{0cm}

\setlength{\textfloatsep}{0cm}
\begin{algorithm}[]
\small
\caption{\texttt{Build\_Inverted\_Index}($\mathbf{D}, L_{wta}, b$)}
\label{alg:bcbii_construction}
\KwIn{$\mathbf{D} = \{\mathbf{V}_i\}_{i=1}^n$: vector set database, $L_{wta}$: \# of \WTA, $b$: the length of bloom filters}
\KwOut{$\mathbf{I}$: inverted index}

\tcp{Generate Count Bloom Filters: Algorithm \ref{alg:gen_count_bloom_filter}}
$\mathcal{C} = \texttt{Gen\_Count\_Bloom\_Filter}(\mathbf{D}, L_{wta}, b)$\;

\tcp{Build Inverted Index}
Initialize $\mathbf{I} = \{\mathbf{I}_i\}_{i=1}^b$, where $\mathbf{I}_i = \emptyset$ for $i = 1, \ldots, b$\;
\For{$j = 1$ \KwTo $n$}{
    \For{$i = 1$ \KwTo $b$}{
        $c_i^{(j)} = \mathbf{C}^{(j)}[i]$\;
        $\mathbf{I}_i = \mathbf{I}_i \cup \{(j, c_i^{(j)})\}$\;
    }
}
\For{$i = 1$ \KwTo $b$}{
    Sort $\mathbf{I}_i$ in descending order by $c_i^{(j)}$\;
}

\Return $\mathbf{I}$\;
\end{algorithm}
\setlength{\textfloatsep}{0cm}

The inverted index plays a critical role in enabling efficient vector set search. As illustrated in Figure \ref{fig:framework}, each vector set is first inserted into a count Bloom filter. Inverted lists are then constructed and ranked in descending order according to the count values at each filter position. This strategy leverages the intuition that higher count values imply a greater probability of collisions at the same positions, suggesting stronger similarity among vector sets.

Algorithm \ref{alg:gen_count_bloom_filter} outlines the procedure for generating count Bloom filters across all vector sets. The inputs include a database $\mathbf{D}$ containing $n$ vector sets, the winner-takes-all parameter $L_{wta}$, and the desired length $b$ of the Bloom filters. Initially, sparse binary codes for the dataset are generated using the \texttt{Gen\_Binary\_Codes} procedure defined in Algorithm \ref{alg:hashing} (line 1). Subsequently, a count Bloom filter $\mathcal{C}$ is computed for each vector set by aggregating the binary codes of its constituent vectors. Specifically, each count Bloom filter $\mathbf{C}^{(j)}$ is produced by summing the binary vectors in $\mathbf{H}_j$ (lines 2–4), followed by returning the complete set of filters (line 5).

Algorithm \ref{alg:bcbii_construction} presents the method for constructing the inverted index. This algorithm takes the same input as Algorithm \ref{alg:gen_count_bloom_filter}, and outputs the inverted index $\mathbf{I}$. First, count Bloom filters for all entries in the dataset are generated using Algorithm \ref{alg:gen_count_bloom_filter} (line 1). For each position $i$ in the filters, an inverted list $\mathbf{I}_i$ is created, storing pairs of set indices and their corresponding count values (lines 2–6). Finally, each list is sorted in descending order based on these count values (lines 7–8). This index structure effectively narrows down the search space by filtering candidates early in the process.

\subsubsection{Vector Set Sketches Based on Binary Bloom Filter}
Vector set sketches constitute the second layer of \BDLCF, offering a unified binary representation for each vector set. These sketches are generated using binary Bloom filters, where a bitwise OR operation is applied across the binary codes of all vectors within a set.

Due to their binary structure, these sketches allow efficient similarity estimation via Hamming distance, utilizing fast XOR and popcount instructions supported by modern CPU architectures \cite{mula2018faster}. This design significantly reduces computational overhead by eliminating the need for costly aggregation steps inherent in Hausdorff distance calculations.

To better understand the operation of the building set sketches, we first present the binary Bloom filter:

\begin{definition1}[\textbf{Binary Bloom Filter} \cite{bonomi2006improved}]
\label{def_bloom_hash_code}
Let $\mathbf{V} = \{\mathbf{v}_1, \mathbf{v}_2, ..., \mathbf{v}_m\}$ be a set of vectors. The \textbf{binary Bloom filter} $\mathbf{B}$ for $\mathbf{V}$ is a binary array of length $b$, obtained by performing a bitwise OR operation on the binary codes all vectors in $\mathbf{V}$. Formally, $\mathbf{B} = \mathcal{H}(\mathbf{v}_1) \lor \mathcal{H}(\mathbf{v}_2) \lor ... \lor \mathcal{H}(\mathbf{v}_m)$, where $\mathcal{H}(\mathbf{v}_i)$ is the binary vector of $\mathbf{v}_i$ and $\lor$ denotes the bitwise OR operation.
\end{definition1}

As illustrated in Figure \ref{fig:framework}, each vector set is first inserted into a binary Bloom filter. This binary Bloom filter is then converted into a corresponding vector set sketch.

\setlength{\textfloatsep}{0cm}
\begin{algorithm}[t]
\small
\caption{\texttt{Build\_Set\_Sketches}($\mathbf{D}, L_{wta}, b$)}
\label{alg:bbhc_construction}
\KwIn{$\mathbf{D} = \{\mathbf{V}_i\}_{i=1}^n$: vector set database, $L_{wta}$: \# of \WTA, $b$: the length of bloom filters}
\KwOut{$\mathcal{S} = \{\mathbf{S}^{(i)}\}_{i=1}^n$: set sketches}

\tcp{Generate Binary Codes: Algorithm \ref{alg:hashing}}
$\mathbf{D}^\mathbf{H} = \texttt{Gen\_Binary\_Codes}(\mathbf{D}, L_{wta})$;

\tcp{Construct Binary Bloom Filters}
$\mathcal{B} = \{\mathbf{B}^{(i)}\}_{i=1}^n$, where $\mathbf{B}^{(i)} \in \{0,1\}^b$\;
\For{$i = 1$ \KwTo $n$}{
    $\mathbf{B}^{(i)} = \bigvee_{\mathbf{h} \in \mathbf{H}_i} \mathbf{h}$, where $\mathbf{H}_i \in \mathbf{D}^\mathbf{H}$\tcp*{$\bigvee$ is $OR$}
}
\tcp{Build Set Sketches}
$\mathcal{S} = \{\mathbf{S}^{(i)}\}_{i=1}^n$, where $\mathbf{S}^{(i)} = \mathbf{B}^{(i)}$\;

\Return $\mathcal{S}$\;
\end{algorithm}
\setlength{\textfloatsep}{0cm}

Algorithm \ref{alg:bbhc_construction} details the procedure for constructing set sketches from a vector set database. The input includes a database $\mathbf{D}$ containing $n$ vector sets, the winner-takes-all parameter $L_{wta}$, and the binary code length $b$. The output is a set of sketches $\mathcal{S}$, providing compact encodings for each vector set. The algorithm starts by invoking the \texttt{Gen\_Binary\_Codes} function from Algorithm \ref{alg:hashing} to produce sparse binary codes for the entire database (line 1). It then constructs binary Bloom filters $\mathcal{B}$ for all vector sets (lines 2-4), where each binary Bloom filter $\mathbf{B}^{(i)}$ is obtained via a bitwise OR operation ($\bigvee$) over the binary codes of all vectors in the associated set $\mathbf{H}_i$. In the final step (line 5), the binary Bloom filters themselves are used directly as vector set sketches $\mathcal{S}$, which are then returned.

\setlength{\textfloatsep}{0cm}
\begin{algorithm}[!ht]
\small
\caption{\texttt{BioVSS++\_Topk\_Search}($\mathbf{Q}, k, \mathbf{D}, \mathbf{I}, \mathcal{S}, \Theta$)}
\label{alg:vector_set_search}
\KwIn{
    $\mathbf{Q}$: query vector set,
    $k$: \# of results,
    $\mathbf{D} = \{\mathbf{V}_i\}_{i=1}^n$: vector set database,
    $\mathbf{I}$: inverted index,
    $\mathcal{S} = \{\mathbf{S}^{(i)}\}_{i=1}^n$: set sketches,
    $\Theta = \{A: \text{access number of lists}, M: \text{minimum count}, c: \text{the size of candidate set}, T: \text{\# of candidates}, L_{wta}: \text{\# of \WTA}, b: \text{the length of bloom filters}\}$
}
\KwOut{$\mathcal{R}$: top-$k$ vector sets}

\tcp{Get Query Count Bloom Filter and Sketch: Algorithm \ref{alg:gen_count_bloom_filter} and \ref{alg:bbhc_construction}}  
$\mathbf{C}_Q = \texttt{Gen\_Count\_Bloom\_Filter}(\mathbf{Q})$\;
$\mathbf{S}_Q = \texttt{Build\_Set\_Sketch}(\mathbf{Q},L_{wta}, b)$\;

\tcp{Filtering by Inverted Index}
$\pi = \texttt{ArgsortDescending}(\mathbf{C}_Q)$ \;
$\mathcal{P} = \{\pi(i) : i \in [1,A]\}$\;
$\mathcal{F}_1 = \emptyset$\;
\For{$p \in \mathcal{P}$}{
    \For{$(i, c_p^{(i)}) \in \mathbf{I}[p]$}{
        \If{$c_p^{(i)} \geq M$}{
            $\mathcal{F}_1 = \mathcal{F}_1 \cup \{i\}$\;
        }
    }
}

\tcp{Filtering by Sketches}
Initialize $\mathcal{G} \gets$ empty max-heap  with capacity $c$\;
\For{$i \in \mathcal{F}_1$}{
    $d = \texttt{Hamming}(\mathbf{S}_Q, \mathbf{S}^{(i)})$\;
    \If{$|\mathcal{G}| < T$}{
        $\mathcal{G}.\texttt{push}((d, i))$\;
    }
    \ElseIf{$d < \mathcal{G}.\texttt{top}()[0]$}{
        $\mathcal{G}.\texttt{pop}()$\;
        $\mathcal{G}.\texttt{push}((d, i))$\;
    }
}
$\mathcal{F}_2 = \{i | (\_, i) \in \mathcal{G}\}$\;

\tcp{Select top-$k$ results}
$\mathcal{D} = \emptyset$\;
\For{each $(i, d_H) \in \mathcal{F}$}{
    $d_i = Haus(\mathbf{Q}, \mathbf{V}_i)$\;
    $\mathcal{D} = \mathcal{D} \cup \{(\mathbf{V}_i, d_i)\}$\;
}
$\mathbf{R} = \{(\mathbf{V}_i, d_i) \in \mathcal{D} \mid d_i \leq d_i^{(k)}\}$, where $d_i^{(k)}$ is the $k$-th smallest $d_i$ in $\mathcal{D}$\;

\Return $\mathbf{R}$\;
\end{algorithm}
\setlength{\textfloatsep}{0cm}

\subsection{Search Execution in \BioVSS}
This section presents the search execution process in \BioVSS. Leveraging the inverted index and set sketches, \BDLCF can efficiently eliminate unrelated vector sets.

Our search execution approach is based on a dual-layer filtering mechanism. At the first layer, we utilize the inverted index to rapidly remove a substantial number of dissimilar vector sets. At the second layer, we apply the vector set sketches for additional filtering. This addresses the drawbacks of linear scanning in Algorithm \ref{alg:hashing}, minimizing the computational overhead.

Algorithm \ref{alg:vector_set_search} describes \BioVSS top-$k$ query process, which utilizes a two-stage filter \BDLCF for efficient vector set search. The input comprises a query set $\mathbf{Q}$, the desired number of results $k$, the database $\mathbf{D}$, a pre-computed inverted index $\mathbf{I}$, set sketches $\mathcal{S}$, and parameters $\Theta$. The output $\mathbf{R}$ holds the top-$k$ most similar vector sets. The algorithm starts by constructing a count Bloom filter $\mathbf{C}_Q$ and a set sketch $\mathbf{S}_Q$ for the query set (lines 1-2). \BDLCF subsequently applies two filtering stages: 1) inverted index filtering (lines 3-9): This stage utilizes the query's count Bloom filter to locate potential candidates. It chooses the top-$A$ positions with the highest counts in $\mathbf{C}_Q$ and extracts vector sets from the inverted index $\mathbf{I}$ that possess counts exceeding a threshold $M$ at these positions. This generates an initial candidate set $\mathcal{F}_1$. 2) sketch-based filtering (lines 10-18): This stage refines the candidates through set sketches. It calculates the Hamming distance between the query sketch $\mathbf{S}_Q$ and each candidate's sketch $\mathbf{S}^{(i)}$, preserving a max-heap of the $T$ closest candidates. This yields a further refined candidate set $\mathcal{F}_2$. Subsequently, the algorithm evaluates the actual Hausdorff distance between $\mathbf{Q}$ and each remaining candidate in $\mathcal{F}_2$ (lines 19-22), determining the top-$k$ results based on these distances (line 23) and then returns it (line 24).

\subsection{Theoretical Analysis of Dual-Layer Filtering Mechanism}
\label{sec: Filtering Mechanism}
\BDLCF framework demonstrates potential compatibility with diverse set-based distance metrics. This versatility stems from its filter structure being independent of specific distance measurements. This dual-layer approach achieves efficiency through progressive refinement: the count Bloom filter-based inverted index rapidly reduces the search space, while the binary Bloom filter-based sketches enable similarity assessment of the remaining candidates. To establish the theoretical foundation, we introduce the concept of set connectivity and analyze its relationship with filter collision patterns.

\begin{definition1}[\textbf{Set Connectivity}]
\label{def:set_connectivity}
For two vector sets $\mathbf{Q}$ and $\mathbf{V}$, their set connectivity is defined as:
$$
Conn(\mathbf{Q},\mathbf{V}) = \sum_{\mathbf{q} \in \mathbf{Q}} \sum_{\mathbf{v} \in \mathbf{V}} sim(\mathbf{q}, \mathbf{v}),
$$
where $sim(\mathbf{q}, \mathbf{v})$ represents a pairwise similarity.
\end{definition1}

A fundamental insight of our framework is that the effectiveness of both Bloom filters stems from their ability to capture set relationships through hash collision positions.  Specifically, when vector sets share similar elements, their hash functions map to overlapping positions, manifesting as either accumulated counts in the count Bloom filter or shared bit patterns in the binary Bloom filter.  This position-based collision mechanism forms the theoretical foundation for both filtering layers.  The following theorem establishes that such collision patterns in both filter types correlate with set connectivity, thereby validating the effectiveness of our dual-layer approach:

\begin{theorem}[\textbf{Collision-Similarity Relationship}]
For a query vector set $\mathbf{Q}$ and two vector sets $\mathbf{V_1}$ and $\mathbf{V_2}$, where $\mathbf{Q} \cap_h \mathbf{V}$ denotes hash collisions between elements from $\mathbf{Q}$ and $\mathbf{V}$ in either count Bloom filter or binary Bloom filter. If their collision probability satisfies:
$$
P(\mathbf{Q} \cap_h \mathbf{V}_1 \neq \emptyset)\geq
 P(\mathbf{Q} \cap_h \mathbf{V}_2 \neq \emptyset),
$$

Then:
$$
Conn(\mathbf{Q},\mathbf{V}_1) \gtrsim Conn(\mathbf{Q},\mathbf{V}_2),
$$
where \( \gtrsim \) denotes that the left-hand side is approximately greater than the right-hand side.
\end{theorem}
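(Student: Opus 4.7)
The plan is to convert the set-level collision probability into a sum of pairwise similarities via the locality-sensitive property of the hash family $\mathcal{H}$, and then to observe that this sum is (up to a constant) precisely $Conn(\mathbf{Q},\mathbf{V})$. The argument should be carried out uniformly over the two Bloom filter variants, because both filters are built from the same sparse codes produced by $\mathcal{H}$: a collision in the counting filter at position $i$ is the event that at least one code in $\mathbf{Q}$ and one in $\mathbf{V}$ both activate bit $i$, which is also precisely what a shared $1$-bit between the two binary sketches encodes. This common substrate lets me reduce the proof to a single probabilistic calculation on element pairs.

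First I would decompose the event $\mathbf{Q} \cap_h \mathbf{V} \neq \emptyset$ at the element level. For each pair $(\mathbf{q},\mathbf{v}) \in \mathbf{Q} \times \mathbf{V}$, let $E_{\mathbf{q},\mathbf{v}}$ denote the event that $\mathcal{H}(\mathbf{q})$ and $\mathcal{H}(\mathbf{v})$ share at least one active bit position, so that $\{\mathbf{Q} \cap_h \mathbf{V} \neq \emptyset\} = \bigcup_{\mathbf{q},\mathbf{v}} E_{\mathbf{q},\mathbf{v}}$. By Definition~\ref{def_LSH}, $P(E_{\mathbf{q},\mathbf{v}})$ is a monotone increasing function of $sim(\mathbf{q},\mathbf{v})$ that, to leading order, equals $c\cdot sim(\mathbf{q},\mathbf{v})$ for a constant $c$ depending on $L_{wta}$ and the code length $b$. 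Bonferroni's inequalities would then sandwich the union probability:
\begin{equation*}
\sum_{\mathbf{q},\mathbf{v}} P(E_{\mathbf{q},\mathbf{v}}) - \sum_{(i,j):\,i<j} P(E_i \cap E_j) \;\leq\; P(\mathbf{Q} \cap_h \mathbf{V} \neq \emptyset) \;\leq\; \sum_{\mathbf{q},\mathbf{v}} P(E_{\mathbf{q},\mathbf{v}}) \;=\; c \cdot Conn(\mathbf{Q},\mathbf{V}).
\end{equation*}
Monotonicity of the first-order term drives the conclusion: if the two collision probabilities are ordered, so are the corresponding sums $\sum P(E_{\mathbf{q},\mathbf{v}})$, and therefore the connectivities, up to a correction governed by the second-order Bonferroni term.

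The main obstacle I expect is controlling that second-order term cleanly, since it is exactly what forces the approximation symbol $\gtrsim$ rather than a strict inequality. I plan to argue that the sparsity regime $L_{wta}/b \ll 1$ built into \BioHash makes joint activation of two distinct pair-collision events at the same bit rare, so the pairwise-intersection sum is negligible compared with $\sum P(E_{\mathbf{q},\mathbf{v}})$ and the first-order approximation is valid. A secondary subtlety is that a comparison of union probabilities does not strictly transfer to a comparison of their first-order expansions unless the two target sets have comparable cardinality; I would address this either by normalizing by $|\mathbf{V}_i|$ or by invoking the standing assumption from Theorem~\ref{theorem: ultimate} that $m$ is invariant across database sets, which is precisely the regime in which the dual-layer filter \BDLCF is designed to operate.
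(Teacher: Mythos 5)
Your proposal is sound at the same (heuristic) level of rigor as the theorem itself, but it reaches the conclusion by a genuinely different route. The paper does not decompose the collision event via inclusion--exclusion at all: it models the no-collision probability directly as an independence product, $P(\mathbf{Q} \cap_h \mathbf{V} \neq \emptyset) = 1 - \prod_{\mathbf{q}}\prod_{\mathbf{v}}\bigl(1 - sim(\mathbf{q},\mathbf{v})\bigr)$, rearranges the hypothesis into an inequality between the two products, takes logarithms, and then invokes the Taylor expansion $\log(1-x) \approx -x$ for small $x$ to pass from sums of $\log(1-sim)$ to sums of $sim$, i.e.\ to $Conn$. So the paper's only approximation is the linearization of the logarithm, at the price of assuming pairwise collisions are independent across all $(\mathbf{q},\mathbf{v})$ pairs; your Bonferroni sandwich avoids that independence assumption, but in exchange you need two approximations the paper does not: the first-order LSH linearization $P(E_{\mathbf{q},\mathbf{v}}) \approx c\cdot sim(\mathbf{q},\mathbf{v})$ (with $c$ depending on $L_{wta}$ and $b$), and the claim that the second-order intersection term is negligible in the sparse regime $L_{wta}/b \ll 1$. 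Both arguments ultimately rest on similarities being small and both honestly produce only the $\gtrsim$ conclusion, so neither is strictly stronger; your version is somewhat more faithful to the actual filter mechanics (shared active bits), while the paper's is shorter and pushes all modeling into one product formula. Your closing worry about unequal cardinalities is not something the paper addresses either --- it compares unnormalized sums exactly as you do --- so invoking the common-$m$ assumption is a reasonable, if optional, tidying of a gap the paper leaves open.
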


\begin{proof}
By the definition of collision probability:

$$
P(\mathbf{Q} \cap_h \mathbf{V}_1 \neq \emptyset) = 1 - \prod_{\mathbf{q} \in \mathbf{Q}} \prod_{\mathbf{v} \in \mathbf{V}_1} (1 - sim(\mathbf{q},\mathbf{v})),
$$

$$
P(\mathbf{Q} \cap_h \mathbf{V}_2 \neq \emptyset) = 1 - \prod_{\mathbf{q} \in \mathbf{Q}} \prod_{\mathbf{v} \in \mathbf{V}_2} (1 - sim(\mathbf{q},\mathbf{v})).
$$

The given condition implies:

$$
1 - \prod_{\mathbf{q} \in \mathbf{Q}} \prod_{\mathbf{v} \in \mathbf{V}_1} (1 - sim(\mathbf{q},\mathbf{v})) \geq
 1 - \prod_{\mathbf{q} \in \mathbf{Q}} \prod_{\mathbf{v} \in \mathbf{V}_2} (1 - sim(\mathbf{q},\mathbf{v})).
$$

Therefore:

$$
\prod_{\mathbf{q} \in \mathbf{Q}} \prod_{\mathbf{v} \in \mathbf{V}_1} (1 - sim(\mathbf{q},\mathbf{v})) \leq
 \prod_{\mathbf{q} \in \mathbf{Q}} \prod_{\mathbf{v} \in \mathbf{V}_2} (1 - sim(\mathbf{q},\mathbf{v})).
$$

Taking the logarithm of both sides:

$$
\sum_{\mathbf{q} \in \mathbf{Q}} \sum_{\mathbf{v} \in \mathbf{V}_1} \log(1 - sim(\mathbf{q},\mathbf{v})) \leq \sum_{\mathbf{q} \in \mathbf{Q}} \sum_{\mathbf{v} \in \mathbf{V}_2} \log(1 - sim(\mathbf{q},\mathbf{v})).
$$

Using the Taylor series expansion of \( \log(1-x) \) at \( x = 0 \):

$$
\log(1-x) = -x - \frac{x^2}{2} - \frac{x^3}{3} - \dots, \quad x \in [0,1]
$$

Approximating this for small \( x \):

$$
\log(1-x) \approx -x + O(x^2).
$$

Since \( sim(\mathbf{q}, \mathbf{v}) \in [0, 1] \) is a small value, we can approximate the following relationship:

$$
\sum_{\mathbf{q} \in \mathbf{Q}} \sum_{\mathbf{v} \in \mathbf{V}_1} sim(\mathbf{q}, \mathbf{v}) \gtrsim \sum_{\mathbf{q} \in \mathbf{Q}} \sum_{\mathbf{v} \in \mathbf{V}_2} sim(\mathbf{q}, \mathbf{v})
$$

This establishes:
$$
Conn(\mathbf{Q},\mathbf{V}_1) \gtrsim Conn(\mathbf{Q},\mathbf{V}_2)
$$
\end{proof}

This theoretical analysis demonstrates that higher collision probability in our filters correlates with stronger set connectivity, providing a foundation for the effectiveness of \BDLCF. The relationship between hash collisions and set connectivity validates that our dual-layer filtering mechanism effectively preserves and identifies meaningful set relationships during the search process, while the metric-independent nature of this correlation supports the framework's potential adaptability to various set distance measures.

\subsection{Discussion on Metrics Extensibility}
\BioVSS algorithm demonstrates potential adaptability to a range of set-based distance metrics beyond the Hausdorff distance, owing to the decoupled design between its filtering mechanism and the specific metric applied. Representative examples of compatible metrics include: (1) set distances defined by maximal or minimal point-pair distances \cite{toussaint1984optimal}; (2) mean aggregate distance \cite{fujita2013metrics}; and (3) various Hausdorff distance extensions \cite{ConciKubrusly2018}, among others.

This decoupling is particularly evident in the construction phase. During the creation of inverted indexes or generation of hash codes, no dependence on the specific distance metric is introduced. As a result, \BioVSS holds promise for extension to alternative distance functions. Additional discussions are provided in Section \ref{sec: Alternative_metric} and \ref{sec: Filtering Mechanism}.

\section{Experiments}
\label{sec:Experiments}

\subsection{Settings}
\begin{table}[!ht]
\scriptsize
\centering
\caption{Summary of Datasets}
\label{tab:datasetSummary}
\setlength{\tabcolsep}{2pt} % 调整列间距
\renewcommand{\arraystretch}{1} % 调整行距
\begin{small} % 调整表格字体大小
\begin{tabular}{lcccc}
\toprule
\textbf{Dataset} & \# of Vectors & \# of Vector Sets & Dim. & \# of Vectors/Set \\
\midrule
\texttt{CS} & 5,553,031 & 1,192,792 & 384 & [2, 362] \\
\texttt{Medicine} & 15,053,338 & 2,693,842 & 384 & [2, 1923] \\
\texttt{Picture} & 2,513,970 & 982,730 & 512 & [2, 9] \\
% Add more rows as needed
\bottomrule
\end{tabular}
\end{small}
% \vspace{-12pt}
\end{table}

\setcounter{footnote}{0}
\subsubsection{Datasets}
For this study, we utilize text datasets derived from the Microsoft academic graph \cite{sinha2015overview}. Two datasets were extracted from computer science and medicine domains, specifically targeting authors with a minimum of two first-author publications. Paper texts within these datasets were transformed into vector representations through the embedding model all-MiniLM-L6-v2\footnote{\url{https://www.sbert.net} \& \url{https://huggingface.co}\label{sb_url}} (widely adopted). Furthermore, we developed an image dataset employing ResNet18\footref{sb_url} for feature extraction purposes. The following presents detailed information regarding these datasets:
\begin{itemize}[leftmargin=\parindent]
\item {Computer Science Literature} (\CS): It contains 1,192,792 vector sets in the field of computer science, with 5,553,031 vectors. Each set comprises 2 to 362 vectors (Table \ref{tab:datasetSummary}).
\item {Medicine Literature} (\Medicine): It contains 2,693,842 vector sets in the field of Medicine, with 15,053,338 vectors. Each set comprises 2 to 1923 vectors (Table \ref{tab:datasetSummary}).
\item {Product Pictures} (\Picture): It contains 982,730 vector sets sourced from the AliProduct dataset \cite{DBLP_picture}, with 2,513,970 vectors. Each set represents 2 to 9 images of the same product, covering 50,000 different products (Table \ref{tab:datasetSummary}).
\end{itemize}

\subsubsection{Baselines}
We evaluate our proposed approach against multiple indexing and quantization methodologies. In particular, we utilize techniques from Facebook's Faiss library \cite{douze2024faiss} as baseline methods for comparison. The evaluated approaches include: 1) \IVFFLAT \cite{zhang2023vbase, douze2024faiss}, which utilizes an inverted file index with flat vectors for efficient processing of high-dimensional data; 2) \IndexIVFPQ \cite{ge2013optimized, jegou2010product, douze2024faiss}, which integrates inverted file indexing with product quantization to achieve vector compression and enhanced query performance; 3) \IVFScalarQuantizer \cite{douze2024faiss}, which applies scalar quantization within the inverted file index framework to balance speed and accuracy; 4) \BioVSS, our proposed approach that leverages a bio-inspired dual-layer cascade filter for efficiency enhancement through effective pruning.

Given the lack of efficient direct vector set search approaches utilizing Hausdorff distance in high-dimensional spaces, all techniques depend on centroid vectors for index construction.

\subsubsection{Evaluation Metric}
To evaluate the performance of vector set search algorithms, we utilize the recall rate across various top-$k$ values as our evaluation metric.

The recall rate at top-$k$ is formulated as:
$\text{Recall@}k = \frac{|R_k(\mathbf{Q}) \cap G_k(\mathbf{Q})|}{|G_k(\mathbf{Q})|},$
where $\mathbf{Q}$ represents a query set, $R_k(\mathbf{Q})$ denotes the algorithm's top-$k$ retrieved results for query $\mathbf{Q}$, and $G_k(\mathbf{Q})$ indicates the ground-truth (precise calculations according to Definition \ref{def:Hausdorff}). We conduct 500 queries and present the averaged recall rate.

\subsubsection{Implementation}
Our experiments were conducted on a computing platform with Intel Xeon Platinum 8352V and 512 GB of memory. The core components of our method\footnote{\url{https://github.com/whu-totemdb/biovss}\label{my_code}} were implemented in C++ and interfaced with Python.

\subsubsection{Default Parameters}
In our experiments, we focus on the following parameters: 1) the size of Bloom filter  $\{\underline{1024}, 2048\}$; 2) the number of winner-takes-all $\{16, 32, 48, \underline{64}\}$; 3) the list number of inverted index accessed $\{1, 2, \underline{3}\}$; 4) Minimum count value of inverted index $\{\underline{1}, 2\}$; 5) The size of candidate set  $\{20k, 30k, 40k, \underline{50k}\}$; and 6) The number of results returned $\{\underline{3, 5}, 10, 15, 20, 25, 30\}$. Underlined values denote default parameters in our controlled experiments.

\subsection{Storage and Construction Efficiency of Filter Structures}
\label{sec: Storage_main}
\BDLCF consists of two sparse data structures: the count Bloom filter and the binary Bloom filter. Both filters demonstrate sparse characteristics, requiring optimized storage strategies.

The storage optimization utilizes two well-known sparse formats: Coordinate (\COO) \cite{DBLP_storage1} and Compressed Sparse Row (\CSR) \cite{DBLP_storage1}. \COO format handles dynamic updates through (row, column, value) tuples. \CSR format delivers enhanced compression by preserving row pointers and column indices, which proves especially suitable for static index structures.

Table \ref{table:storage_size_comparison} presents the results on \CS. With a 1024-size Bloom filter, \CSR decreases the storage overhead from 9.1GB to 0.55GB for the count Bloom filter, reaching a 94\% reduction ratio. Similar patterns appear across different parameter settings, with \CSR uniformly surpassing \COO in storage efficiency. Results for \Medicine and \Picture display comparable features and are detailed in the Section \ref{sec: store_another}. Table \ref{table:processing_times} depicts the time overhead of different stages. While \BioHash training represents the primary computational cost at $1504s$, the construction of count and binary Bloom filters exhibits effectiveness, requiring only $24s$ and $22s$.

% \addvspace{3pt}
\begin{table}[!ht]
    \centering
    \caption{Filter Storage Comparison on \CS Dataset}
    \begin{tabular}{cccccccc}
        \toprule
        \multirow{2}{*}{\textbf{Bloom}} & \multirow{2}{*}{$L$} & \multicolumn{3}{c}{Count Bloom Space (GB)} & \multicolumn{3}{c}{Binary Bloom Space (GB)} \\
        \cline{3-8}
        & & Dense & COO & CSR & Dense & COO & CSR \\
        \hline
        \multirow{4}{*}{1024} & 16 & \multirow{4}{*}{9.1} & 1.09 & 0.55 & \multirow{4}{*}{1.14} & 0.36 & 0.19 \\
        & 32 &  & 2.02 & 1.01 &  & 0.67 & 0.34 \\
        & 48 & & 2.84 & 1.43 & & 0.95 & 0.48 \\
        & 64 & & 3.59 & 1.8 & & 1.2 & 0.6 \\
        \hline
        \multirow{4}{*}{2048} & 16 & \multirow{4}{*}{18.2} & 1.18 & 0.59 & \multirow{4}{*}{2.28} & 0.39 & 0.2 \\
        & 32 & & 2.2 & 1.1 & & 0.73 & 0.37 \\
        & 48 & & 3.13 & 1.57 & & 1.04 & 0.53 \\
        & 64 & & 4 & 2 & & 1.33 & 0.67 \\
        \bottomrule
    \end{tabular}
    \label{table:storage_size_comparison}
    
\end{table}
% \addvspace{5pt}

\begin{table}[!ht]
    \centering
    \caption{Filter Processing Time on \CS Dataset}
    \begin{tabular}{ccccc}
        \toprule
        \multirow{2}{*}{\textbf{Processing Stage}} & \BioHash & \BioHash & Count & Single \\
             & Training & Hashing & Bloom & Bloom \\
        \midrule
        Time & 1504s & 14s & 24s & 22s \\
        \bottomrule
    \end{tabular}
    \label{table:processing_times}
\end{table}

\subsection{Performance Analysis and Parameter Experiments}
In this section, we begin with preliminary experiments to evaluate the effectiveness of \NaiveBioVSS and \BioVSS, highlighting their advantages over brute-force search. We then shift focus to \BioVSS, an improved variant that integrates \BDLCF to enhance efficiency. Lastly, we conduct comprehensive parameter tuning experiments on \BioVSS to evaluate its performance and determine optimal configurations.

\begin{figure*}[!ht]
    \centering
    \includegraphics[width=0.85\textwidth]{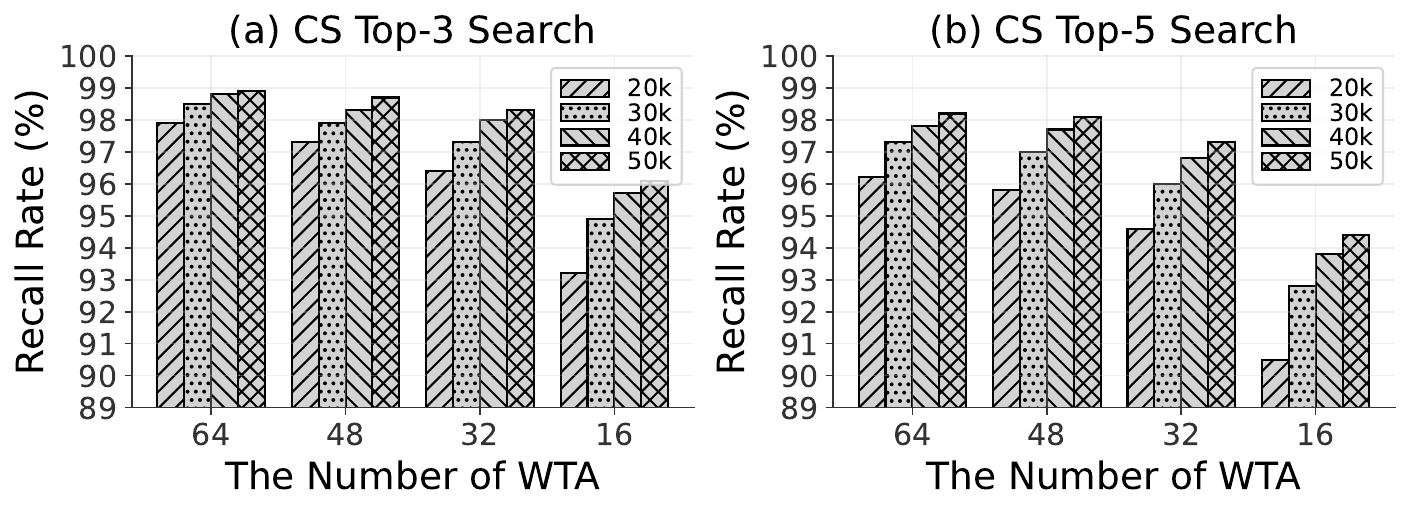}
    
    % \vspace{0.5cm}
    
    \includegraphics[width=0.85\textwidth]{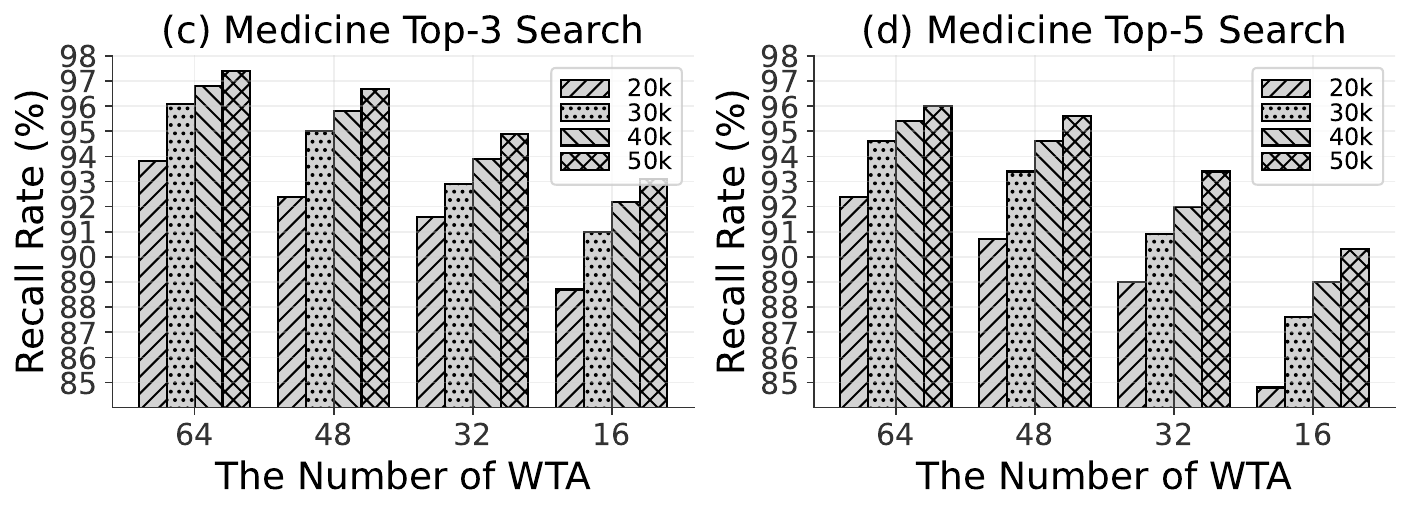}
    
    % \vspace{0.5cm}
    
    \includegraphics[width=0.85\textwidth]{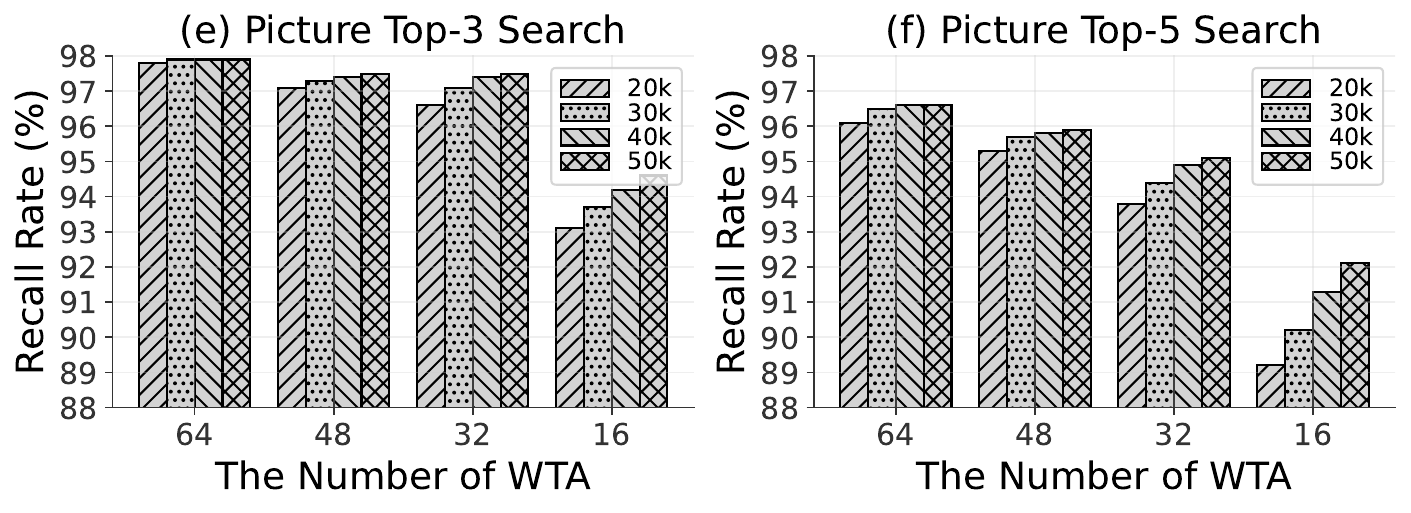}
    
    \caption{Recall Rate with Winner-Take-All Number (Bloom Size = 1024)}
    \label{fig:wta-recall-comparison-1024}
\end{figure*}

\begin{figure*}[!ht]
    \centering
    \includegraphics[width=0.85\textwidth]{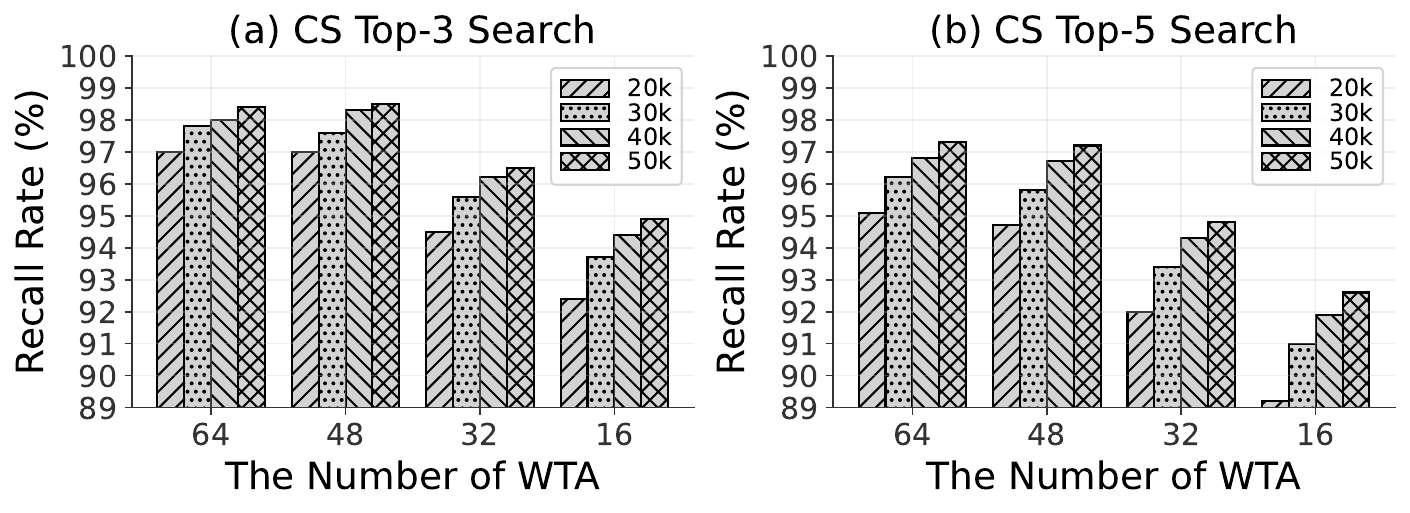}
    
    % \vspace{0.5cm}
    
    \includegraphics[width=0.85\textwidth]{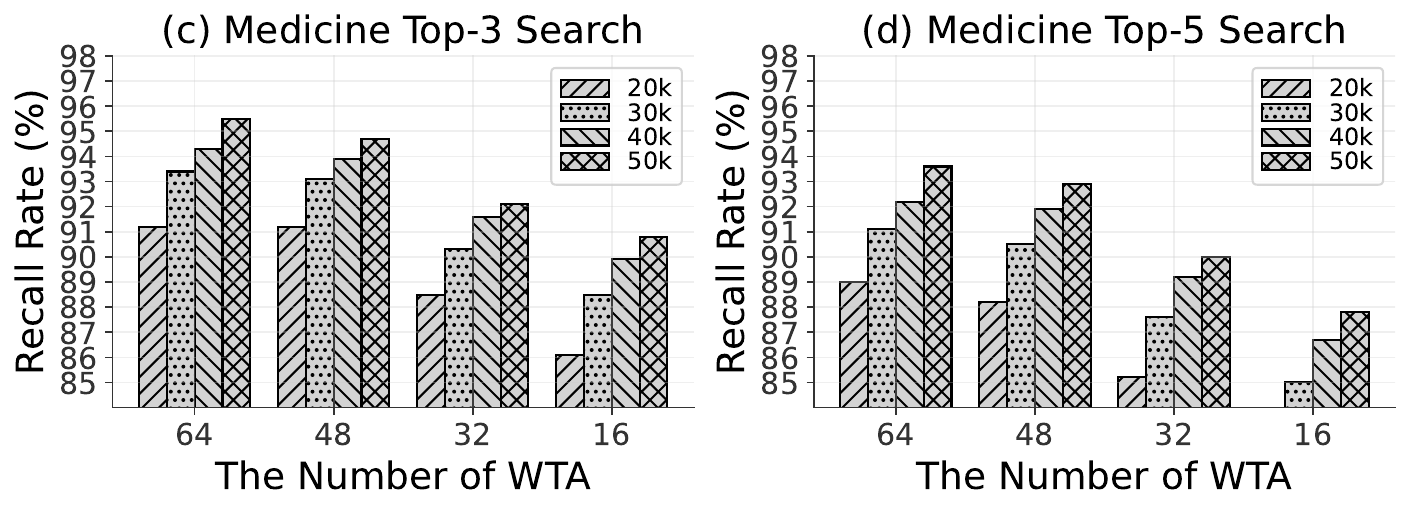}
    
    % \vspace{0.5cm}
    
    \includegraphics[width=0.85\textwidth]{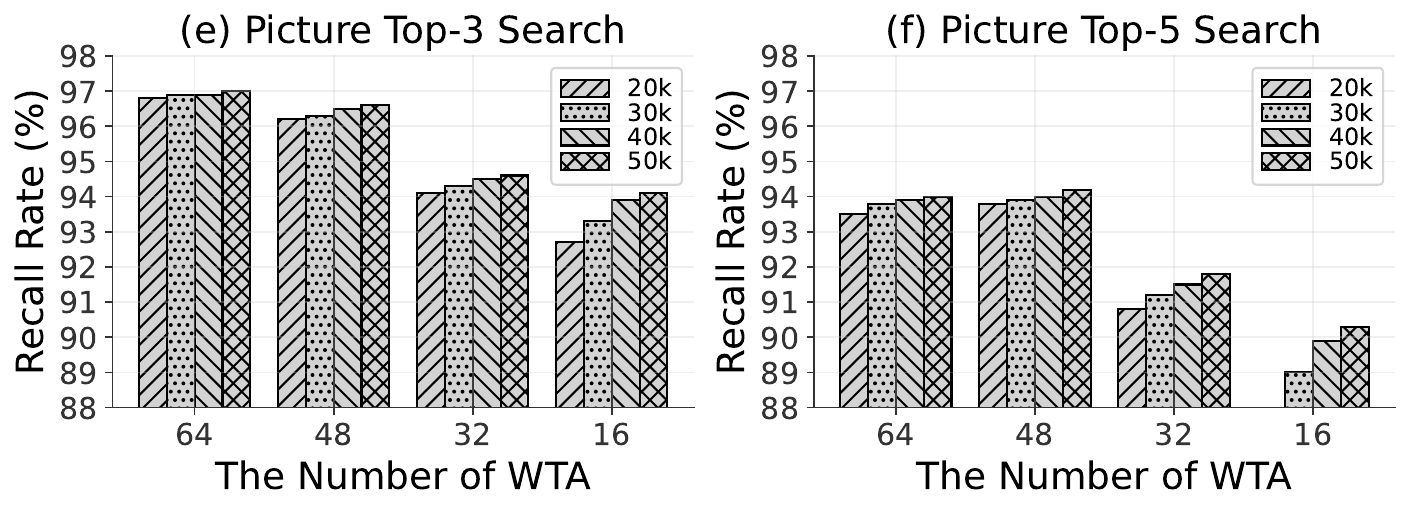}
    
    \caption{Recall Rate with Winner-Take-All Number (Bloom Size = 2048)}
    \label{fig:wta-recall-comparison-2048}
\end{figure*}
\setlength{\textfloatsep}{0cm}

\begin{figure}[t]
    \centering
    \includegraphics[width=0.85\textwidth]{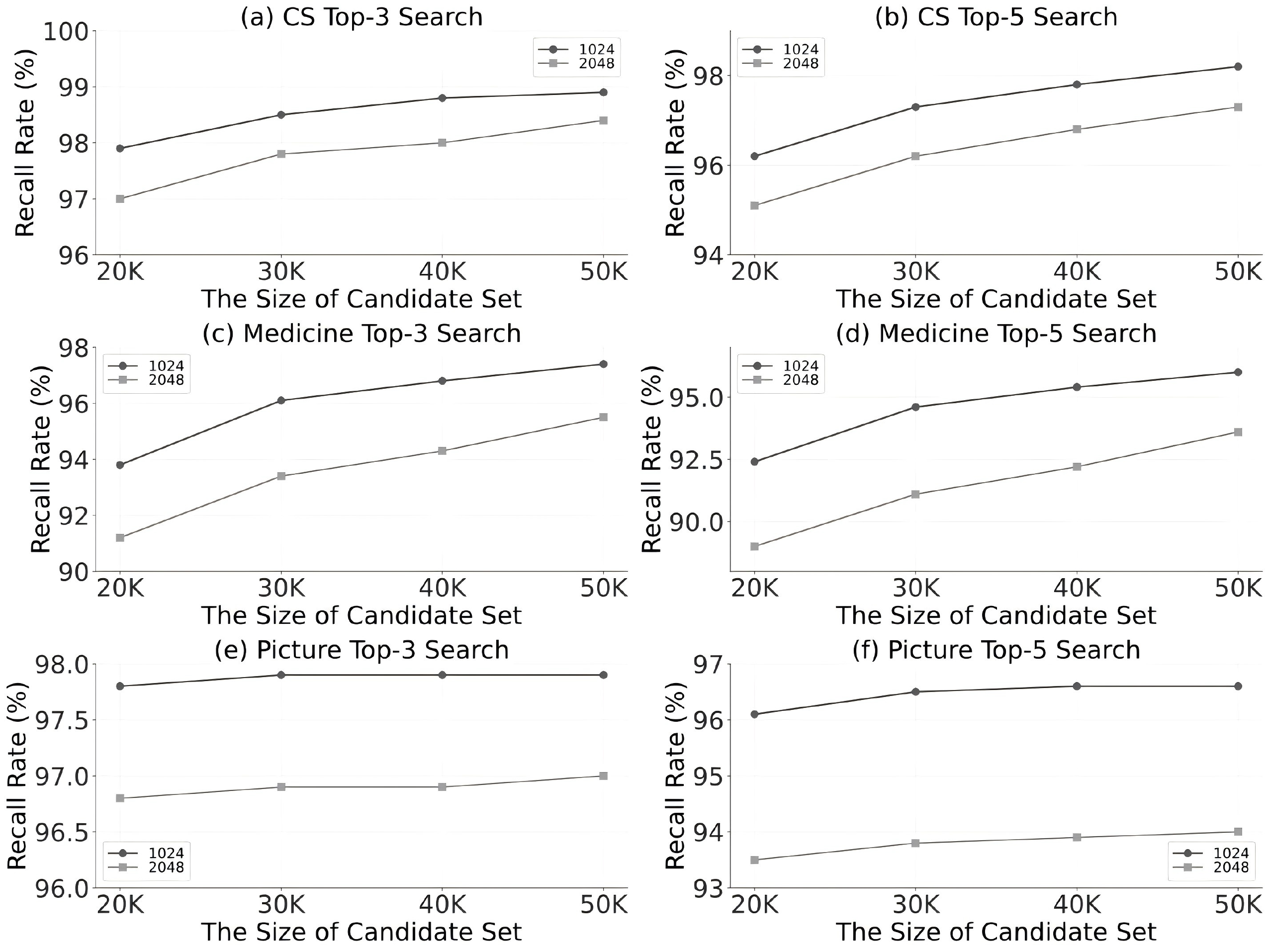}
    \captionsetup{aboveskip=2pt}  % 调整caption和图片之间的距离
    \caption{Impact of Bloom Filter Size on Recall}
    \label{fig:expand}
    \addvspace{-12pt}
\end{figure}
\setlength{\textfloatsep}{0cm}

\setlength{\textfloatsep}{0cm}
\begin{figure}[t]
    % \addvspace{-8pt}
    \centering
    \includegraphics[width=0.85\textwidth]{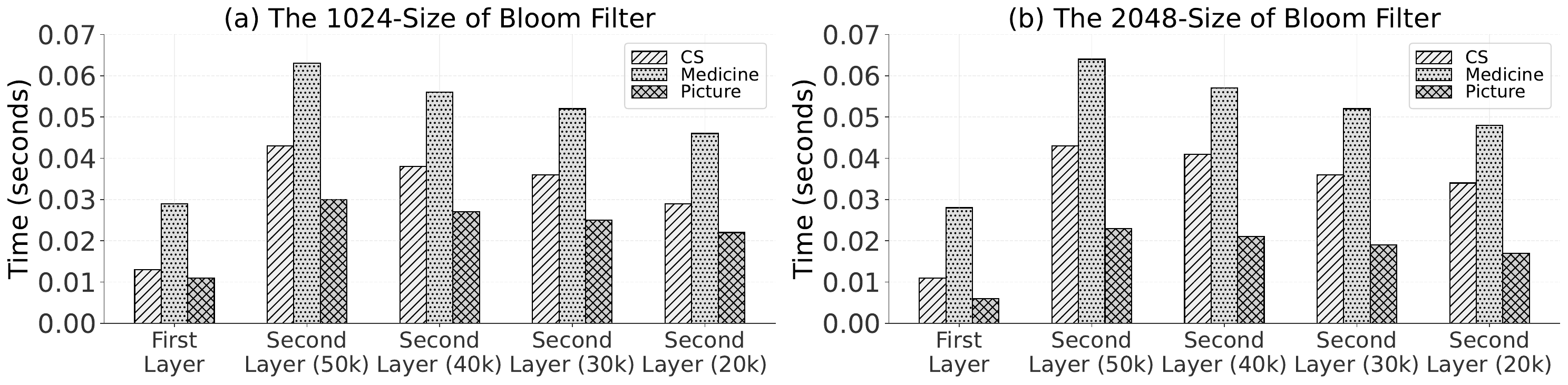}
    \captionsetup{aboveskip=2pt}  % 调整caption和图片之间的距离
    \caption{Filtering Times by the Size of Bloom Filter}
    \label{fig:filtering}
\end{figure}

\begin{table}
\centering
\small
\setlength{\tabcolsep}{3pt}
\caption{Speedup vs. Linear Scan on \CS Dataset}
\begin{tabular}{lccccc}
\toprule
\textbf{Method} & Total Time (s) & Speedup & Top-3 Recall & Top-5 Recall\\ 
\midrule
\texttt{Brute} & 9.16 & 1x & 100\% & 100\% \\
\NaiveBioVSS  & 0.73 & 12x & 97.5\% & 97.2\%\\
\BioVSS  & 0.20 & 46x & 97.9\% & 96.2\%\\
\bottomrule
\end{tabular}
\label{tab:CSSpeedup}

\vspace{5pt}

\caption{Speedup vs. Linear Scan on \Medicine Dataset}
\begin{tabular}{lccccc}
\toprule
\textbf{Method} & Total Time (s) & Speedup & Top-3 Recall & Top-5 Recall\\
\midrule
\texttt{Brute} & 16.32 & 1x & 100\% & 100\%\\
\NaiveBioVSS & 2.13 & 8x & 96.5\% & 95.8\%\\
\BioVSS & 0.24 & 78x & 93.8\% & 92.3\%\\
\bottomrule
\end{tabular}
\label{tab:MedicineSpeedup}

\vspace{5pt}

\caption{Speedup vs. Linear Scan on \Picture Dataset}
\begin{tabular}{lccccc}
\toprule
\textbf{Method} & Total Time (s) & Speedup & Top-3 Recall & Top-5 Recall\\
\midrule
\texttt{Brute} & 8.75 & 1x & 100\% & 100\%\\
\NaiveBioVSS & 0.49 & 17x & 100\% & 99.9\%\\
\BioVSS & 0.20 & 44x & 97.8\% & 96.1\%\\
\bottomrule
\end{tabular}
\label{tab:PictureSpeedup}
\end{table}

\subsubsection{Comparison with Brute-Force Search}
We begin by evaluating the performance of \NaiveBioVSS and \BioVSS against brute-force search, focusing on execution time and recall rate across the \CS, \Medicine, and \Picture datasets, each using a candidate set size of $20k$. As presented in Tables \ref{tab:CSSpeedup}, \ref{tab:MedicineSpeedup}, and \ref{tab:PictureSpeedup}, both \NaiveBioVSS and \BioVSS achieve substantial speed improvements over brute-force search while preserving high recall levels. On the \CS dataset, \BioVSS delivers a notable 46× speedup, reducing the execution time from 9.16 seconds to 0.20 seconds, with top-$3$ and top-$5$ recall rates reaching 97.9\% and 96.2\%, respectively. Even greater acceleration is observed on the \Medicine dataset, where \BioVSS achieves a 78× speedup, lowering execution time from 16.32 seconds to just 0.24 seconds. On the \Picture dataset, \BioVSS reduces execution time from 8.75 seconds to 0.20 seconds—yielding a 44× improvement—while maintaining 97.8\% and 96.1\% recall for top-3 and top-5 results, respectively. These significant gains primarily stem from the filtering mechanism incorporated in \BioVSS, which effectively mitigates the global scanning overhead inherent in \NaiveBioVSS.

\subsubsection{Parameter Study of \BioVSS}
\label{sec: Parameter_exp}
To optimize the performance of \BioVSS, we conducted a comprehensive parameter study. We examined several key parameters: the number of winner-takes-all, the size of the Bloom filter, the list number of inverted index accessed, and the minimum count value of the inverted index.

\myparagraph{Number of Winners-Take-All}
The number of winner-takes-all (see Definition \ref{def: FlyHash}) in \BioVSS corresponds to the number of hash functions. As shown in Figures \ref{fig:wta-recall-comparison-1024} and \ref{fig:wta-recall-comparison-2048}, increasing this parameter from 16 to 48 led to significant performance improvements across \CS, \Medicine, and \Picture datasets for Bloom filter sizes of 1024 and 2048. On \CS dataset, with a 1024-size Bloom filter and 20k candidates, recall improved by 5.7\% when increasing the parameter from 16 to 48. However, the performance gain plateaued between 48 and 64, with only a 0.4\% improvement in the same scenario. Similarly, \Medicine and \Picture datasets exhibit the same trend when increasing the parameter. This trend was consistent across different datasets and Bloom filter sizes. This improvement can be attributed to the enhanced distance-preserving property of the hash code as the number of winner-takes-all increases. Hence, the default number of winner-takes-all is set to 64.

\myparagraph{Size of the Bloom Filter}
The size of the Bloom filter directly influences the length of set sketches and the list number in the inverted index for \BioVSS. As illustrated in Figure \ref{fig:expand}, our experiments explore filter sizes of 1024 and 2048, revealing that a Bloom filter with a size of 1024 achieves optimal recall rates across all candidate numbers. On \CS dataset, the 1024-size filter configuration yields a $98.9\%$ recall rate with a candidate set of $50k$, while maintaining a robust $98\%$ recall even with a reduced set of $20k$ candidates. Similarly, for \Medicine and \Picture datasets, the 1024-size filter achieves optimal recall rates. The effectiveness of this filter size can be attributed to its capacity to capture discriminative features without over-emphasizing local characteristics, thereby providing a good trade-off between specificity and generalization in this process. As demonstrated in Figure \ref{fig:filtering}, The various sizes of the Bloom filters exhibit low latency below 70 milliseconds. Hence, the default size of the Bloom filter is set to 1024.

\myparagraph{List Number of Inverted Index Accessed}
The list number of inverted index accessed determines the search range in BioVSS, with larger values leading to broader searches. As shown in Table \ref{tab:access_number}, increasing this parameter from 1 to 3 significantly improves recall rates while modestly increasing processing time. For \CS dataset with a 1024-size Bloom filter, the top-3 recall improves from 92.9\% to 98.9\%, and the top-5 recall from 91.1\% to 98.2\%, with processing time increasing from 0.008s to 0.013s. Similar improvements are observed with a 2048-size Bloom filter. \Medicine and \Picture datasets show the same trends, with slightly lower recall rates but similar time increases. Notably, the recall rate gain diminishes when moving from 2 to 3 accesses. Hence, the default list number of inverted index accessed is 3.

\begin{table}[!h]
    \centering
    \footnotesize
    \setlength{\tabcolsep}{3pt}
    \captionsetup{skip=2pt}  % 调整caption和表格之间的间距
    
    \caption{List Access Number for Top-3 (T-3) and Top-5 (T-5)}
    \begin{tabular}{@{}l*{9}{c}@{}}
        \toprule
        \multirow{3}{*}{\textbf{Dataset}} & \multicolumn{3}{c}{1 Access} & \multicolumn{3}{c}{2 Access} & \multicolumn{3}{c}{3 Access} \\
        \cmidrule(lr){2-4} \cmidrule(lr){5-7} \cmidrule(lr){8-10}
        & \multicolumn{2}{c}{Recall (\%)} & Time & \multicolumn{2}{c}{Recall (\%)} & Time & \multicolumn{2}{c}{Recall (\%)} & Time \\
        & T-3 & T-5 & (s) & T-3 & T-5 & (s) & T-3 & T-5 & (s) \\
        \midrule
        \CS (1024) & 92.9 & 91.1 & 0.008 & 98.0 & 97.1 & 0.012 & 98.9 & 98.2 & 0.013 \\
        \CS (2048) & 91.9 & 89.4 & 0.008 & 97.6 & 96.4 & 0.011 & 98.4 & 97.3 & 0.011 \\
        \Medicine (1024) & 92.8 & 90.8 & 0.014 & 96.9 & 95.6 & 0.018 & 97.4 & 96.0 & 0.029 \\
        \Medicine (2048) & 90.3 & 86.8 & 0.014 & 94.3 & 92.2 & 0.019 & 95.5 & 93.6 & 0.028 \\
        \Picture (1024) & 85.1 & 76.3 & 0.017 & 94.8 & 91.2 & 0.024 & 97.9 & 96.6 & 0.030 \\
        \Picture (2048) & 81.5 & 73.6 & 0.013 & 93.1 & 88.0 & 0.018 & 97.0 & 94.0 & 0.023 \\
        \bottomrule
    \end{tabular}
    \label{tab:access_number}
\end{table}

\myparagraph{Minimum Count Value of Inverted Index}
The minimum count value of the inverted index determines the threshold for accessing items in the inverted index lists. As illustrated in Figure \ref{fig:count}, this parameter impacts the trade-off between recall. Setting the value to 0 requires traversing all items. A value of 1 leverages the hash codes' sparsity, eliminating most irrelevant items while maintaining high recall. Increasing to 2 further reduces candidates but at the cost of the recall rate. For \CS dataset with a 1024-size Bloom filter, the top-3 recall decreases from 98.9\% at count 1 to 96.5\% at count 2. Top-5 recall shows a similar trend, declining from 98.2\% to 95.7\%. \Medicine and \Picture datasets exhibit similar trends, with comparable drops in top-3 and top-5 recall rates. Thus,
we set the default minimum count to 1.

\begin{table}
    \centering
    \footnotesize
    \setlength{\tabcolsep}{3.5pt}  % 调小了列间距以适应新增列
    \captionsetup{skip=2pt}
    \caption{Minimum Count Value with Recall Rate and Time}
\begin{tabular}{@{}lccccccc@{}}
\toprule
\multirow{2}{*}{\textbf{Dataset}} & \multicolumn{3}{c}{Minimum Count $=$ 1} & \multicolumn{3}{c}{Minimum Count $=$ 2} \\
\cmidrule(lr){2-4} \cmidrule(lr){5-7}
 & Top-3 & Top-5 & Total Time & Top-3 & Top-5 & Total Time \\
\midrule
\CS (1024) & 98.9\% & 98.2\% & 0.42s & 96.5\% & 95.1\% & 0.45s \\
\CS (2048) & 98.4\% & 97.3\% & 0.45s & 95.1\% & 94.0\% & 0.44s \\
\Medicine (1024) & 97.4\% & 96.0\% & 0.51s & 93.3\% & 91.8\% & 0.50s \\
\Medicine (2048) & 95.5\% & 93.6\% & 0.48s & 89.7\% & 87.0\% & 0.47s \\
\Picture (1024) & 97.9\% & 96.6\% & 0.42s & 92.0\% & 87.3\% & 0.45s \\
\Picture (2048) & 97.0\% & 94.0\% & 0.43s & 87.9\% & 81.6\% & 0.43s \\
\bottomrule
\end{tabular}
\label{fig:count}
\end{table}

\myparagraph{Impact of Embedding Models on Algorithm Performance}
To assess the robustness of \NaiveBioVSS across various vector representations, we conducted experiments using different embedding models. These models inherently produce vectors of varying dimensions, allowing us to evaluate the method's performance across different vector lengths. The experiments were designed to evaluate two critical aspects of embedding model impact. First, the performance consistency was tested using different embedding models within the same modality but with varying vector dimensions. Second, the method's versatility was examined across different modalities while maintaining consistent vector dimensions. \CS and \Picture datasets represent text and image modalities respectively, enabling these comparative analyses. To validate the impact of different vector dimensions within the same modality, two text embedding models were applied to \CS dataset. All-MiniLM-L6-v2\footref{sb_url} and distiluse-base-multilingual-cased-v2\footref{sb_url} from Hugging Face generate 384-dimensional and 512-dimensional vectors. Table \ref{tab:embedding_model} elucidates the efficacy and robustness of \NaiveBioVSS across diverse embedding dimensionalities. For \CS dataset with a Bloom filter cardinality of 1024, both the 384-dimensional MiniLM and 512-dimensional DistilUse models exhibit exceptional recall performance. MiniLM achieves Top-3 and Top-5 recall rates of 98.9\% and 98.2\%, while DistilUse demonstrates 92.3\% and 90.8\%. Significantly, the computational latencies are nearly equivalent irrespective of vector dimensionality, corroborating the dimension-agnostic nature of the search algorithm. The algorithm maintains similarly high recall rates when increasing the Bloom filter size to 2048. Notably, the search process exhibits dimension-invariant computational complexity, ensuring efficient performance across varying dimensional scales. To validate the impact of different modalities with the same vector dimension, embedding models generating 512-dimensional vectors were applied to \CS and \Picture datasets. Specifically, with a Bloom filter cardinality of 1024, both \CS and \Picture datasets demonstrate high recall rates (\>90\% for Top-5), with comparable computational latencies (0.4s+). This performance is maintained when scaling to a size of 2048, evidencing the algorithm's robustness across data modalities.

\begin{table}[h]
\centering
\footnotesize
\setlength{\tabcolsep}{3pt}
\captionsetup{skip=2pt}
\caption{Impact of Different Embedding Methods}
\begin{tabular}{@{}l*{5}{c}@{}}
\toprule
\multirow{2}{*}{\textbf{Dataset}} & \textbf{Embedding} & \textbf{Embedding} & \multicolumn{2}{c}{\textbf{Recall (\%)}} & \textbf{Time} \\
 & \textbf{Model} & \textbf{Dimension} & \textbf{Top-3} & \textbf{Top-5} & \textbf{(s)} \\
\midrule
\CS (1024) & MiniLM & 384 & 98.9 & 98.2 & 0.42\\
\CS (1024) & DistilUse & 512 & 92.3& 90.8& 0.46\\
\Picture (1024) & ResNet18 & 512 & 97.9 & 96.6 & 0.42 \\
\midrule
\CS (2048) & MiniLM & 384 & 98.4 & 97.3 & 0.43\\%
\CS (2048) & DistilUse & 512 & 85.2& 81.0& 0.45\\
\Picture (2048) & ResNet18 & 512 & 97.0 & 94.0 & 0.43\\%
\bottomrule
\end{tabular}
\label{tab:embedding_model}
\end{table}

\myparagraph{Impact of Top-k on Result Quality}
To evaluate the impact of the top-$k$ parameter on result quality, we conducted experiments using default parameters for both \NaiveBioVSS and \BioVSS on \CS, \Medicine, and \Picture datasets. Table \ref{tab:combined_topk} presents the recall rates for various top-$k$ values ranging from 3 to 30. On \CS dataset, \NaiveBioVSS demonstrates consistent performance, maintaining a high recall rate above 98\% across all top-$k$ values. \BioVSS shows slightly higher recall for top-3 (98.9\%) but experiences a gradual decrease as $k$ increases, reaching 96.4\% for top-30. \Medicine dataset reveals a similar trend, with \NaiveBioVSS maintaining high recall rates (99.1\% for top-3, decreasing slightly to 96.9\% for top-30), while \BioVSS shows a decline (from 97.5\% for top-3 to 92.3\% for top-30). \Picture dataset exhibits a similar trend. The slight performance degradation observed in \BioVSS for larger $k$ values suggests a trade-off between efficiency and recall, which may be attributed to its more aggressive filtering mechanism. Notably, both methods maintain high recall rates (above 92\%) even for larger $k$ values, demonstrating their effectiveness in retrieving relevant results across various retrieval scenarios. The filtering mechanism of \BioVSS has brought significant improvements in query efficiency. Smaller top-$k$ values are typically more valuable and often yield the most relevant results in many applications. Consequently, we will focus on \BioVSS in our subsequent experiments. 

\begin{table}[h]
\vspace{4pt}
\centering
\footnotesize
\setlength{\tabcolsep}{3pt}
\caption{Recall Rate for Top-$k$ Across Different Datasets}
\label{tab:combined_topk}
\begin{tabular}{lccccccc}
\toprule
\textbf{Method} & Top-3 & Top-5 & Top-10 & Top-15 & Top-20 & Top-25 & Top-30 \\
\midrule
\multicolumn{8}{c}{\CS Dataset} \\
\midrule
\texttt{BioVSS} & 98.6\% & 98.7\% & 98.5\% & 98.5\% & 98.4\% & 98.2\% & 98.1\% \\
\texttt{BioVSS++} & 98.9\% & 98.2\% & 97.5\% & 97.2\% & 96.9\% & 96.7\% & 96.4\% \\
\midrule
\multicolumn{8}{c}{\Medicine Dataset} \\
\midrule
\texttt{BioVSS} & 99.1\% & 98.8\% & 97.8\% & 97.5\% & 97.2\% & 97.0\% & 96.9\% \\
\texttt{BioVSS++} & 97.4\% & 96.0\% & 94.8\% & 93.9\% & 93.2\% & 92.7\% & 92.3\% \\
\midrule
\multicolumn{8}{c}{\Picture Dataset} \\
\midrule
\texttt{BioVSS} & 100\% & 99.9\% & 99.8\% & 99.7\% & 99.6\% & 99.6\% & 99.5\% \\
\texttt{BioVSS++} & 99.9\% & 99.9\% & 90.0\% & 80.0\% & 85.0\% & 80.0\% & 73.3\% \\
\bottomrule
\end{tabular}
\vspace{-4pt}
\end{table}

\myparagraph{Query Time Analysis}
Experiments evaluated query efficiency across candidate sets ranging from 20k to 50k. As illustrated in Table~\ref{table: all_time}, query time scales approximately linearly with candidate count. With the configuration of WTA=64 and Bloom filter size=1024, the method achieves consistent and efficient query times (0.44s-0.51s for 50k candidates) across all datasets. While reducing WTA hash count to 16 decreases query time by up to 15\%, such reduction compromises recall performance as shown in Figures~\ref{fig:wta-recall-comparison-1024} and \ref{fig:wta-recall-comparison-2048}. Additionally, doubling the Bloom filter size to 2048 offers minimal efficiency gains ($\leq 0.03s$ improvement), making the added memory overhead unjustifiable. These observations support selecting 64 and 1024 as the optimal configurations balancing efficiency and effectiveness.

\begin{table*}[t]
\centering
\caption{Query Time (s) with Different Bloom Filter and WTA for Top-3\&5}
\label{table: all_time}
\resizebox{\textwidth}{!}{% 自动缩放至文本宽度
\footnotesize
\begin{tabular}{l*{16}{c}} % 使用紧凑列格式
\toprule
\multirow{2}{*}{Dataset} & \multicolumn{4}{c}{B=1024, W=64} & \multicolumn{4}{c}{B=1024, W=48} & \multicolumn{4}{c}{B=1024, W=32} & \multicolumn{4}{c}{B=1024, W=16} \\
\cmidrule(lr){2-5} \cmidrule(lr){6-9} \cmidrule(lr){10-13} \cmidrule(lr){14-17}
& 50k & 40k & 30k & 20k & 50k & 40k & 30k & 20k & 50k & 40k & 30k & 20k & 50k & 40k & 30k & 20k \\
\midrule
\CS & 0.44 & 0.35 & 0.29 & 0.21 & 0.46 & 0.48 & 0.38 & 0.29 & 0.44 & 0.46 & 0.31 & 0.28 & 0.43 & 0.45 & 0.37 & 0.27 \\
\Medicine & 0.51 & 0.40 & 0.34 & 0.24 & 0.49 & 0.41 & 0.32 & 0.25 & 0.47 & 0.39 & 0.28 & 0.22 & 0.45 & 0.36 & 0.28 & 0.19 \\
\Picture & 0.44 & 0.36 & 0.28 & 0.20 & 0.45 & 0.36 & 0.29 & 0.20 & 0.45 & 0.39 & 0.30 & 0.21 & 0.45 & 0.38 & 0.29 & 0.21 \\
\midrule

\multirow{2}{*}{Dataset} & \multicolumn{4}{c}{B=2048, W=64} & \multicolumn{4}{c}{B=2048, W=48} & \multicolumn{4}{c}{B=2048, W=32} & \multicolumn{4}{c}{B=2048, W=16} \\
\cmidrule(lr){2-5} \cmidrule(lr){6-9} \cmidrule(lr){10-13} \cmidrule(lr){14-17}
& 50k & 40k & 30k & 20k & 50k & 40k & 30k & 20k & 50k & 40k & 30k & 20k & 50k & 40k & 30k & 20k \\
\midrule
\CS & 0.45 & 0.34 & 0.28 & 0.20 & 0.44 & 0.33 & 0.27 & 0.20 & 0.43 & 0.33 & 0.27 & 0.20 & 0.39 & 0.32 & 0.26 & 0.19 \\
\Medicine & 0.48 & 0.40 & 0.31 & 0.22 & 0.47 & 0.39 & 0.29 & 0.22 & 0.46 & 0.38 & 0.29 & 0.21 & 0.43 & 0.37 & 0.28 & 0.19 \\
\Picture & 0.43 & 0.35 & 0.27 & 0.20 & 0.46 & 0.37 & 0.29 & 0.20 & 0.43 & 0.37 & 0.28 & 0.20 & 0.43 & 0.37 & 0.28 & 0.20 \\
\bottomrule
\end{tabular}}
\begin{minipage}{\textwidth}
\footnotesize
\textsuperscript{1}B = Bloom Filter Size, W = WTA hash count\\
\end{minipage}
\end{table*}

\subsection{Comparison Experiment}
To evaluate the performance of the \BioVSS method, we conducted comparative evaluations against baseline algorithms on the \CS, \Medicine, and \Picture datasets. The experiments focus on analyzing query time and recall rate. The following section details the results and corresponding analysis.

Figures \ref{fig:recall-comparison}(a) and (b) illustrate the experimental outcomes on the \CS dataset. From the comparison of time-recall trends, it is clear that the proposed \BioVSS algorithm consistently surpasses the other three baselines. In order to manage query time, we vary the candidate set size. Results indicate that recall rates improve across all methods as query time increases. Notably, \BioVSS achieves a recall of 98.9\% within only 0.2 seconds for the top-3 query, and maintains a recall of 98.2\% at 0.47 seconds. Even under the top-5 setting, its recall remains above 90\%. It is also observed that both \IVFScalarQuantizer and \IVFFLAT exhibit sharp gains in recall as query time grows, whereas the increase for \IndexIVFPQ is more modest. This may be attributed to inherent limitations in the product quantization encoding technique. Specifically, product quantization segments high-dimensional vectors into multiple subvectors, each of which is quantized independently. Although this enhances compression, it can also introduce information loss, restricting further recall improvements.

Figures \ref{fig:recall-comparison}(c) and (d) report the query performance on the \Medicine dataset. Given the larger data volume, the overall query time increases accordingly. Among the baselines, \IVFScalarQuantizer shows a steeper recall growth curve as time progresses, suggesting greater sensitivity to candidate set size in large-scale settings. In contrast, our \BioVSS method continues to deliver strong performance, even when candidate sets are relatively small. Figures \ref{fig:recall-comparison}(e) and (f) present the results on the \Picture dataset, where the performance trends align closely with those observed in the \CS dataset.

\begin{figure*}[h]
    \centering
    \includegraphics[width=\textwidth]{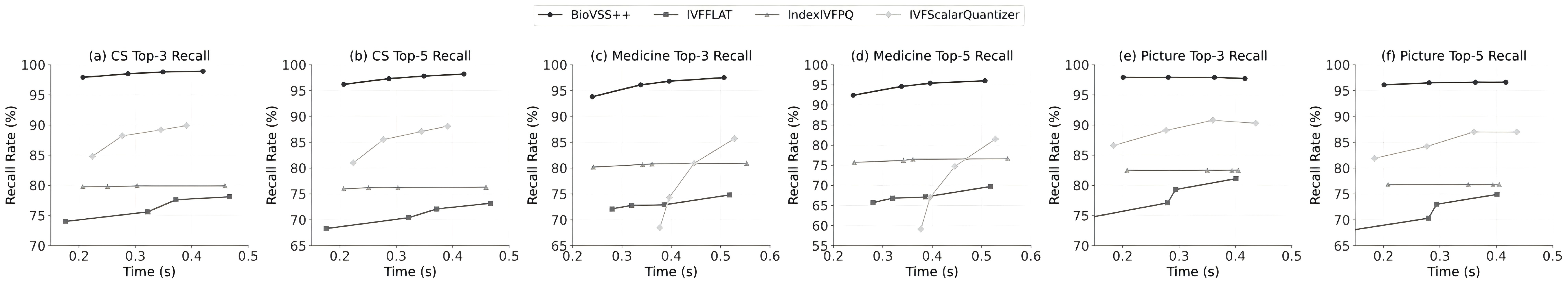}
    \captionsetup{aboveskip=2pt}  % 调整caption和图片之间的距离
    \caption{Recall Rate Comparison for Different Methods}
    \label{fig:recall-comparison}
\end{figure*}

\subsection{Supplementary Experiments}
This section presents additional experiments and analyses further to validate the performance and versatility of \NaiveBioVSS. We explore three key aspects: storage analysis on \Medicine and \Picture datasets, exploration of alternative distance metrics, and performance analysis via \BioHash iteration.

\subsubsection{Storage Analysis on Medicine and Picture Datasets}
\label{sec: store_another}
Tables \ref{table:medicine_storage_size_comparison} and \ref{table:picture_storage_size_comparison} present the storage efficiency analysis on \Medicine and \Picture datasets. These results corroborate the findings from \CS, demonstrating consistent storage reduction patterns across different data domains.

\addvspace{5pt}
\begin{table}[h]
    \centering
    \caption{Filter Storage Comparison on \Medicine Dataset}
    \begin{tabular}{cccccccc}
        \toprule
        \multirow{2}{*}{\textbf{Bloom}} & \multirow{2}{*}{$L$} & \multicolumn{3}{c}{Count Bloom Space (GB)} & \multicolumn{3}{c}{Binary Bloom Space (GB)} \\
        \cline{3-8}
        & & Dense & COO & CSR & Dense & COO & CSR \\
        \hline
        \multirow{4}{*}{1024} & 16 & \multirow{4}{*}{7.5} & 0.57 & 0.29 & \multirow{4}{*}{0.94} & 0.19 & 0.1 \\
        & 32 &  & 1.13 & 0.57 &  & 0.38 & 0.19 \\
        & 48 &  & 1.67 & 0.84 &  & 0.56 & 0.28 \\
        & 64 &  & 2.18 & 1.1 &  & 0.73 & 0.37 \\
        \hline
        \multirow{4}{*}{2048} & 16 & \multirow{4}{*}{15} & 0.61 & 0.31 & \multirow{4}{*}{1.87} & 0.2 & 0.11 \\
        & 32 &  & 1.18 & 0.6 &  & 0.39 & 0.2 \\
        & 48 &  & 1.73 & 0.87 &  & 0.58 & 0.29 \\
        & 64 &  & 2.29 & 1.15 &  & 0.76 & 0.38 \\
        \bottomrule
    \end{tabular}
    \label{table:medicine_storage_size_comparison}
\end{table}
\addvspace{10pt}

\begin{table}[h]
    \centering
    \caption{Filter Storage Comparison on \Picture Dataset}
    \begin{tabular}{cccccccc}
        \toprule
        \multirow{2}{*}{\textbf{Bloom}} & \multirow{2}{*}{$L$} & \multicolumn{3}{c}{Count Bloom Space (GB)} & \multicolumn{3}{c}{Binary Bloom Space (GB)} \\
        \cline{3-8}
        & & Dense & COO & CSR & Dense & COO & CSR \\
        \hline
        \multirow{4}{*}{1024} & 16 & \multirow{4}{*}{20.55} & 2.79 & 1.41 & \multirow{4}{*}{2.57} & 0.93 & 0.48 \\
        & 32 &  & 5.14 & 2.58 &  & 1.71 & 0.87 \\
        & 48 &  & 7.26 & 3.64 &  & 2.42 & 1.22 \\
        & 64 &  & 9.22 & 4.62 &  & 3.07 & 1.55 \\
        \hline
        \multirow{4}{*}{2048} & 16 & \multirow{4}{*}{41.1} & 3.01 & 1.51 & \multirow{4}{*}{5.14} & 1 & 0.51 \\
        & 32 &  & 5.6 & 2.81 &  & 1.87 & 0.94 \\
        & 48 &  & 8.02 & 4.02 &  & 2.67 & 1.35 \\
        & 64 &  & 10.71 & 5.37 &  & 3.57 & 1.8 \\
        \bottomrule
    \end{tabular}
    \label{table:picture_storage_size_comparison}
\end{table}
\addvspace{5pt}

\subsubsection{Exploration of Alternative Distance Metrics}
\label{sec: Alternative_metric}
while the main text demonstrates the effectiveness of \BioVSS using Hausdorff distance, the framework’s applicability extends to other set-based distance metrics. To further validate this extensibility, additional experiments were conducted using alternative distance measures.

Our experiments compared \BioVSS against \DESSERT\footnote{https://github.com/ThirdAIResearch/Dessert. Without IVF indexing.} under various parameter configurations (Table \ref{tab:dessert_comparison}). Using the MeanMin distance metric, \BioVSS achieved a Top-3 recall of 59.0\% with a query time of 0.46 seconds. These results demonstrate reasonable efficiency beyond the Hausdorff setting.
The performance disparity between Hausdorff and MeanMin distances arises from their aggregation mechanisms. Hausdorff utilizes a three-level structure ($min$-\>$max$-\>$max$), whereas MeanMin employs a two-level aggregation ($min$-\>$mean$). Such three-level aggregation creates more distinctive distance distributions. Consequently, similar sets exhibit higher collision probabilities, while dissimilar sets are more effectively separated.

The extensibility to different metrics stems from \BioVSS's decoupled design. The filter structure operates independently of the specific distance metric. This architectural choice enables metric flexibility while maintaining the core filtering mechanisms.

\begin{table}[t]
  \centering
  \caption{Performance Comparison of MeanMin}
  \begin{tabular}{cccc}
    \toprule
    Method & Top-3 & Top-5 & Time \\
    \midrule
    \DESSERT (tables=32, hashes\_per\_t=6) & 45.9\% & 35.8\% & 0.21s \\
    \DESSERT (tables=32, hashes\_per\_t=12) & 40.3\% & 28.8\% & 0.56s \\
    \DESSERT (tables=24, hashes\_per\_t=6) & 42.3\% & 32.6\% & 0.18s \\
    \DESSERT (num\_t=24, hashes\_per\_t=12) & 38.7\% & 27.6\% & 0.36s \\
    \BioVSS (default\_parameter) & 59.0\% & 51.6\% & 0.46s \\
    \bottomrule
  \end{tabular}
  \label{tab:dessert_comparison}
\end{table}

\subsubsection{Performance Analysis via BioHash Iteration}
\label{sec: bio_iter}
\BioHash algorithm serves as a crucial component in our framework, with its reliability directly impacting system performance. While \BioHash has been validated in previous research \cite{FlyLSH_ryali2020bio}, a rigorous examination of its behavior within our specific application domain is necessary.

The iteration count parameter plays a vital role in \BioHash's computational process, specifically in determining learning efficacy. At its core, \BioHash implements a normalized gradient descent optimization approach. The magnitude of parameter updates functions as a key metric for quantifying learning dynamics. This magnitude is defined as:
$$
M_t = \max_{i,j} |{\Delta W_{ij}^t}|,
$$
where $M_t$ represents the update magnitude at iteration $t$, and $\Delta W_{ij}^t$ denotes the weight change for the synaptic connection between neurons $i$ and $j$. This metric effectively captures the most substantial parametric modifications occurring within the network during each training iteration ($batch\_size=10k$).

\begin{figure}[t]
    \centering
    % 第一行的两张图片
    \subfloat{
        \includegraphics[width=0.48\textwidth]{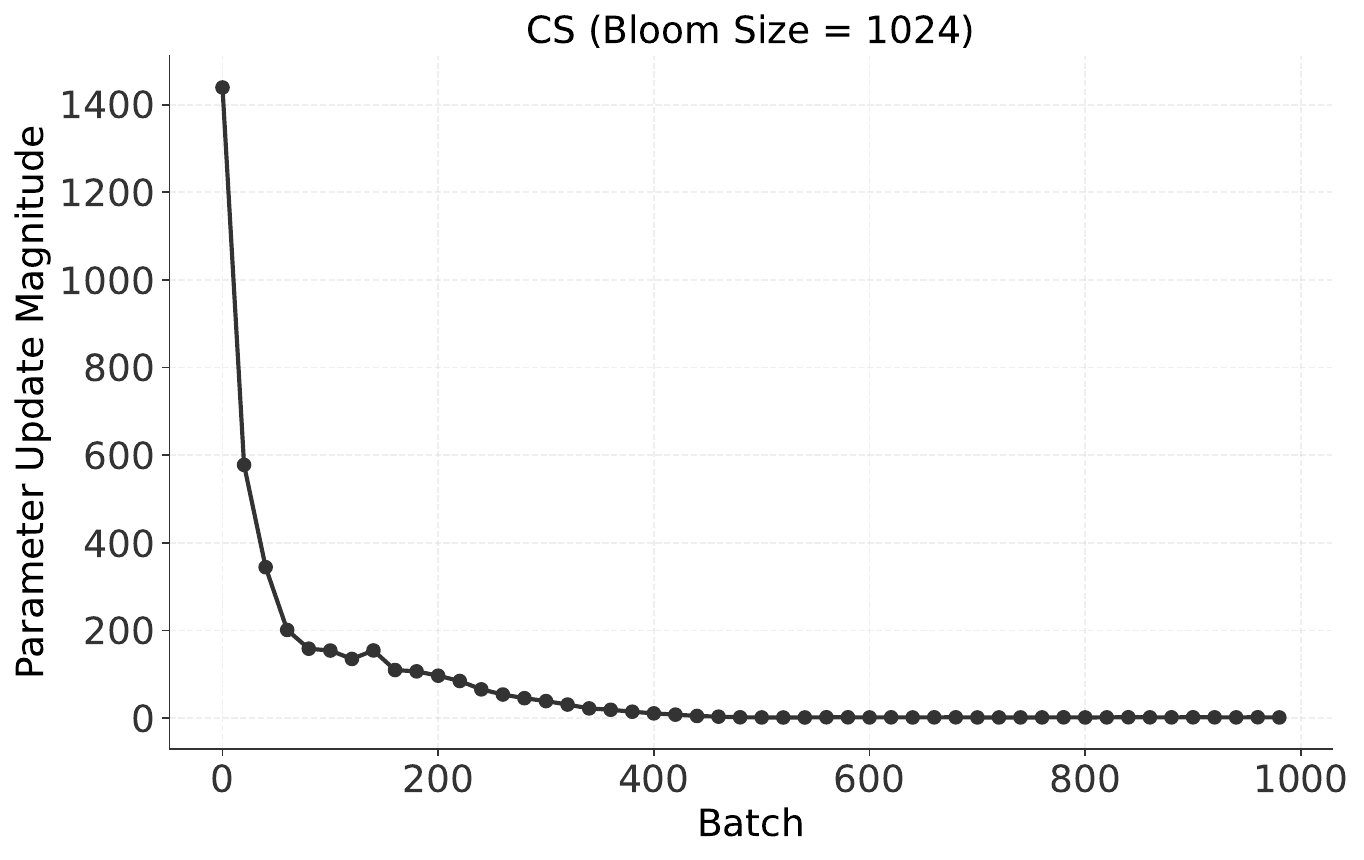}
    }
    \subfloat{
        \includegraphics[width=0.48\textwidth]{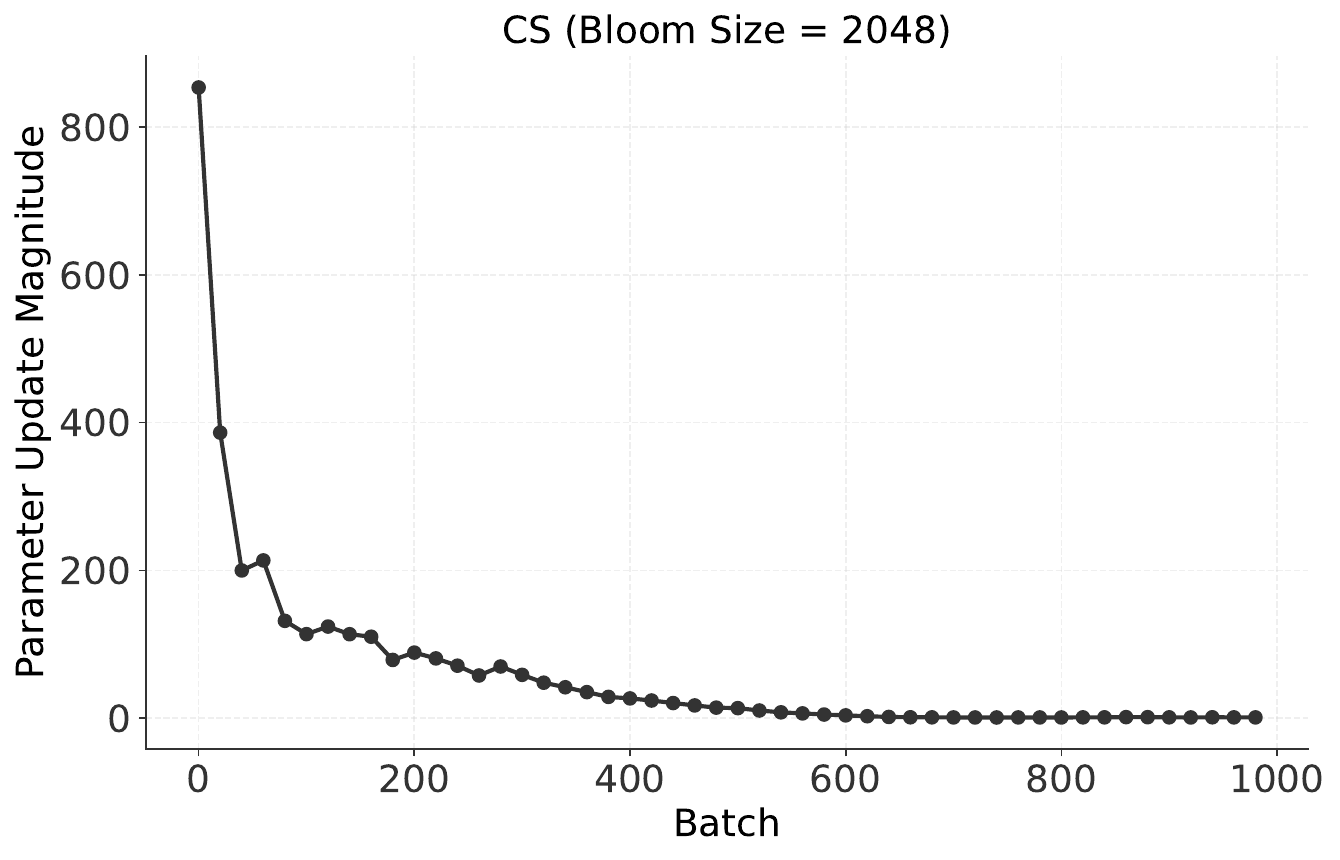}
    }
    \vspace{-0.3cm}
    % 第二行的两张图片
    \subfloat{
        \includegraphics[width=0.48\textwidth]{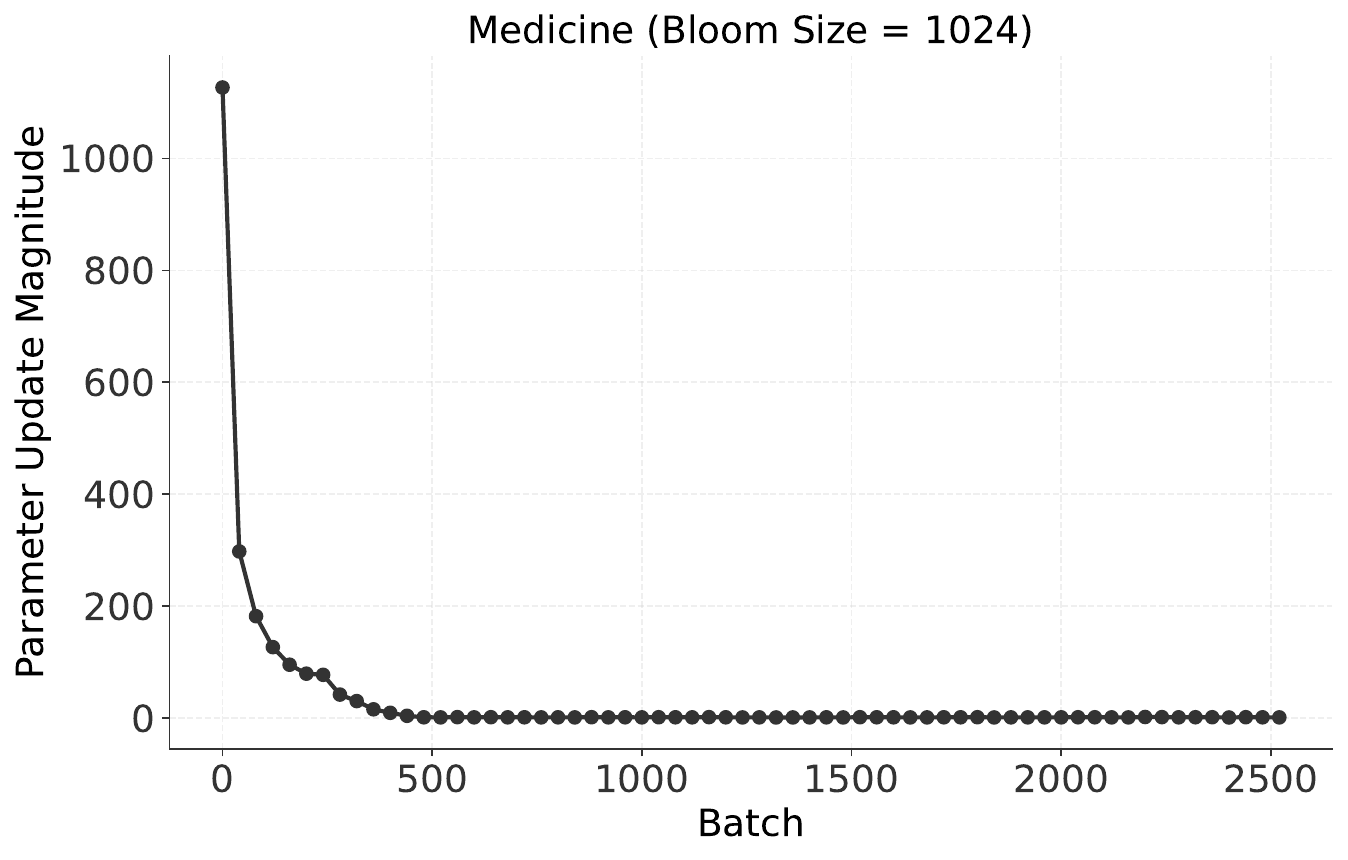}
    }
    \subfloat{
        \includegraphics[width=0.48\textwidth]{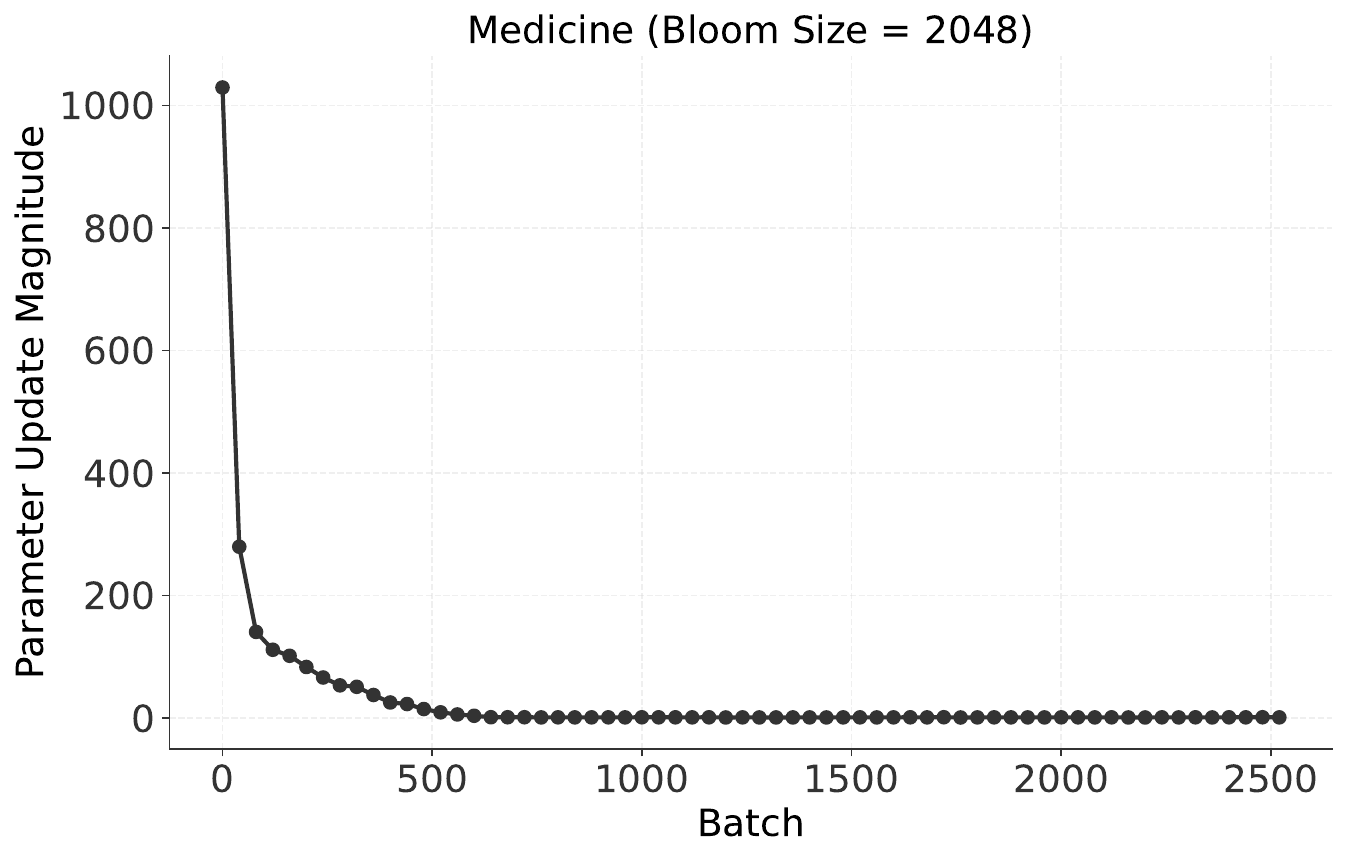}
    }
    \vspace{-0.3cm}
    % 第三行的两张图片
    \subfloat{
        \includegraphics[width=0.48\textwidth]{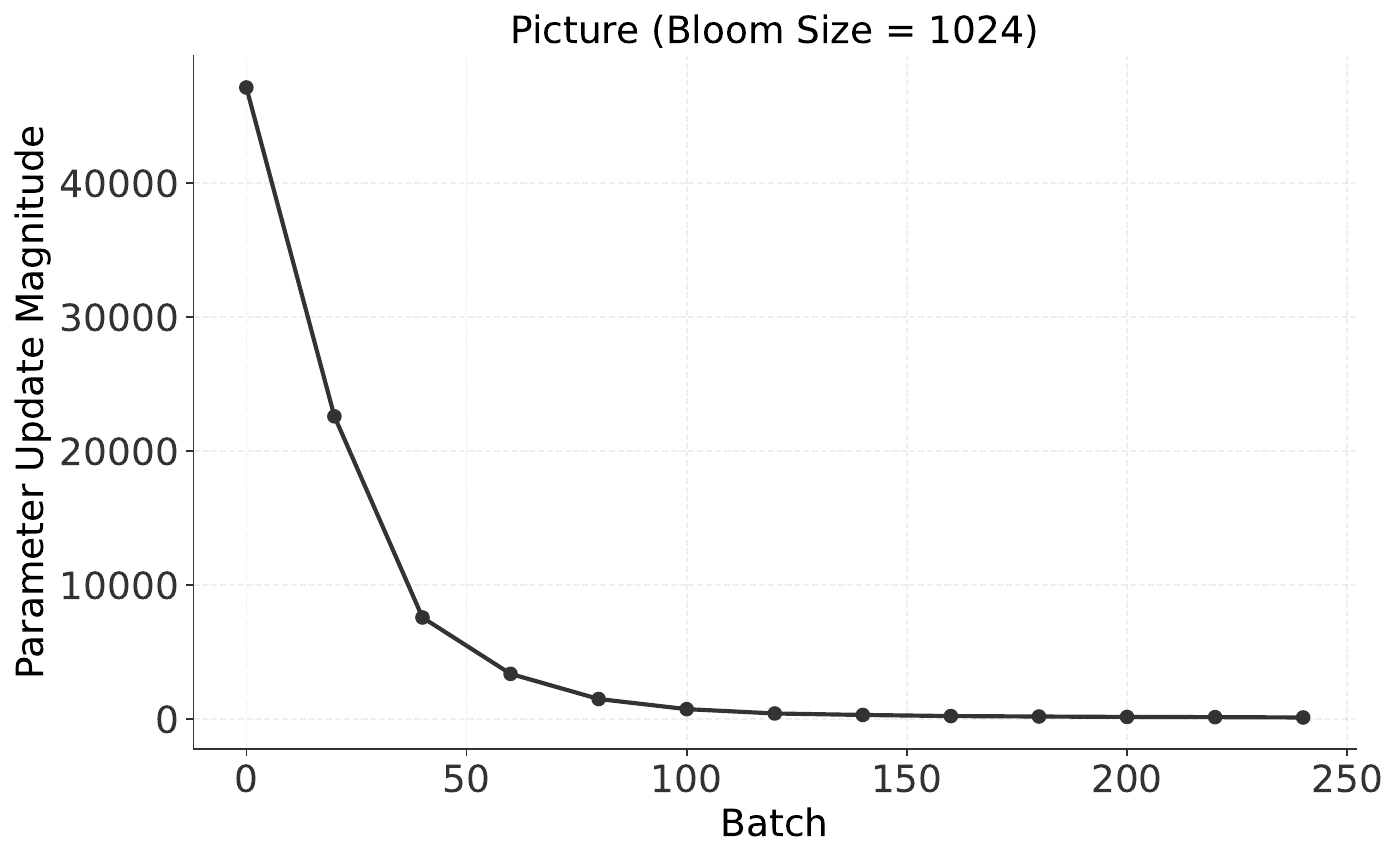}
    }
    \subfloat{
        \includegraphics[width=0.48\textwidth]{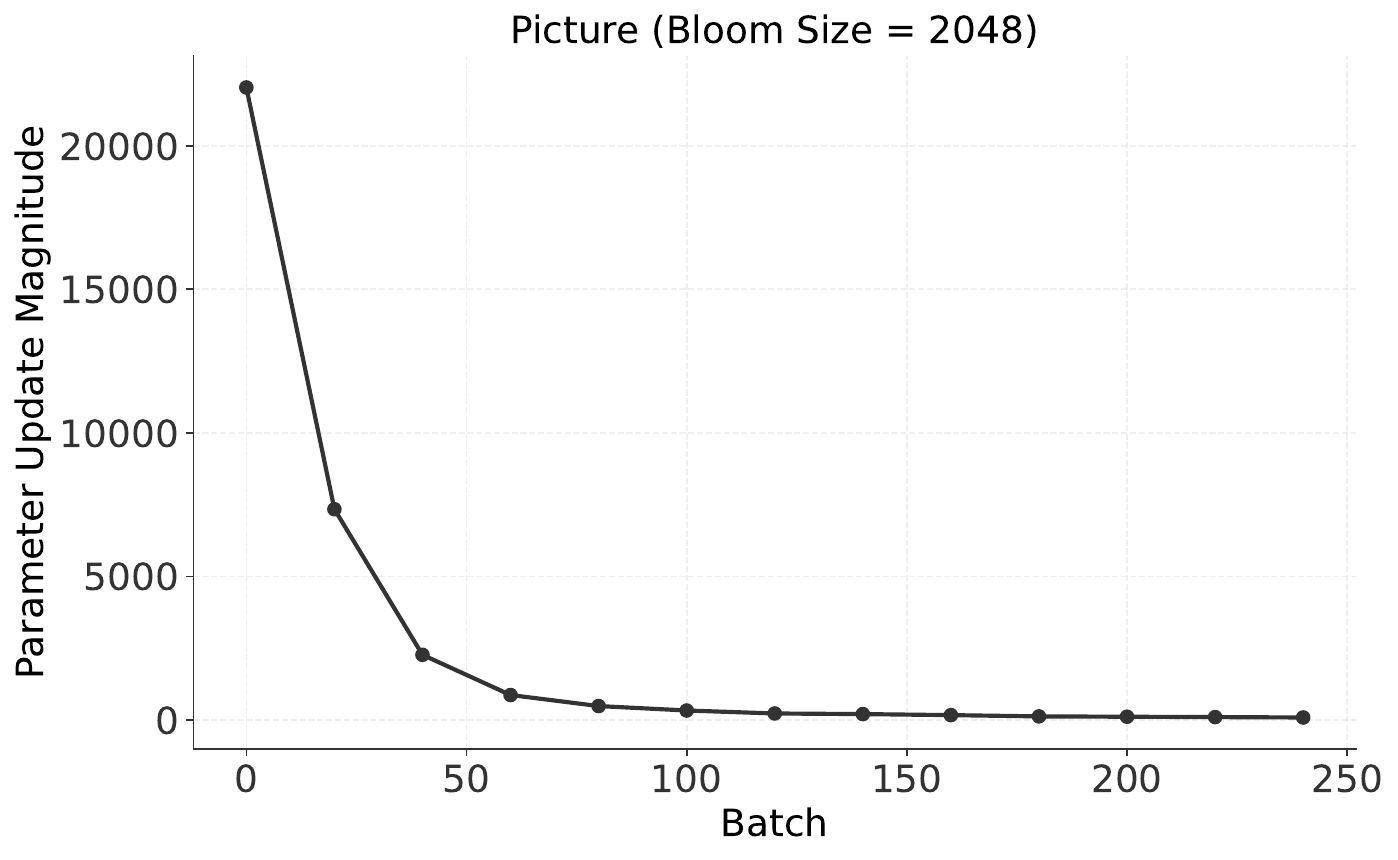}
    }
    
    \caption{Parameter Update Magnitude Across Batches in BioHash}
    \label{fig:biohash_iter}
\end{figure}

Figure \ref{fig:biohash_iter} illustrates the parameter update magnitude dynamics across three distinct datasets (\CS, \Medicine, and \Picture) under varying Bloom filter sizes (1024 and 2048 bits). The experimental results reveal several significant patterns:
\begin{itemize}[leftmargin=*, topsep=0pt]
    \item \textbf{Convergence Behavior:} Across all configurations, the parameter update magnitude exhibits a consistent decay pattern. The parameter updates show high magnitudes during initial batches, followed by a rapid decrease and eventual stabilization. This pattern indicates \BioHash algorithm's robust convergence properties regardless of the domain-specific data characteristics.

    \item \textbf{Filter Size Consistency:} The comparison between 1024 and 2048 Bloom filter configurations demonstrates remarkable consistency in convergence patterns. This observation suggests that \BioHash maintains stable performance characteristics independent of filter size, validating the robustness of our parameter update mechanism across different capacity settings.

    \item \textbf{Cross-Domain Consistency:} The similar convergence patterns observed across \CS, \Medicine, and \Picture datasets validate the algorithm's domain-agnostic nature. Despite the inherent differences in data distributions, \BioHash iteration mechanism maintains consistent performance characteristics.
\end{itemize}

Through extensive empirical analysis across diverse application scenarios, we validated BioHash's effectiveness as a core component of our system.   Our experiments demonstrated consistent and reliable convergence behavior under various environmental configurations.   The results confirmed that BioHash achieves stable parameter updates well before completing the full training process.

\section{Conclusions}
This work addresses the relatively underexplored challenge of vector set search. We propose \NaiveBioVSS and \BioVSS to enable efficient vector set search based on the Hausdorff distance. Theoretical analysis demonstrates the correctness of our algorithms. The dual-layer filtering mechanism efficiently eliminates irrelevant candidates, thereby minimizing computational costs. Experimental results demonstrate that our approach attains more than 50× acceleration over conventional linear scan techniques on million-scale datasets, while maintaining recall rates up to 98.9\%. Future research directions include extending the framework to accommodate additional distance metrics

\section*{Acknowledgement}
This work was supported by the National Key R\&D Program of China (2023YFB4503600), the National Natural Science Foundation of China (62202338, 62372337), and the Key R\&D Program of Hubei Province (2023BAB081)

\bibliographystyle{elsarticle-num} 

\bibliography{references}

\end{document}